\def\showauthornotes{0}
\def\showdraftbox{0}
\newtheorem{theorem}{Theorem}[section]
\newtheorem{definition}[theorem]{Definition}
\newtheorem{lemma}[theorem]{Lemma}
\newtheorem{proposition}[theorem]{Proposition}
\newtheorem{corollary}[theorem]{Corollary}
\newtheorem{claim}[theorem]{Claim}
\def\FullBox{\hbox{\vrule width 6pt height 6pt depth 0pt}}
\def\qed{\ifmmode\qquad\FullBox\else{\unskip\nobreak\hfil
\penalty50\hskip1em\null\nobreak\hfil\FullBox
\parfillskip=0pt\finalhyphendemerits=0\endgraf}\fi}
\def\qedsketch{\ifmmode\Box\else{\unskip\nobreak\hfil
\penalty50\hskip1em\null\nobreak\hfil$\Box$
\parfillskip=0pt\finalhyphendemerits=0\endgraf}\fi}
\newenvironment{proof}{\begin{trivlist} \item {\bf Proof:~~}}
   {\qed\end{trivlist}}
\newcommand\rea{\mathbb R}
\newcommand\N{\mathbb N}
\newcommand\R{\mathbb R}
\newcommand\B{\{0,1\}}      
\newcommand{\V}[1]{\mathbf{#1}}
\newcommand{\marginlabel}[1]%
{\mbox{}\marginpar{\it{\raggedleft\hspace{0pt}#1}}}
\newcommand{\pair}[1]{\left\langle{#1}\right\rangle} 
\definecolor{Mygray}{gray}{0.8}
\let\csname ifcommentflag\expandafter\endcsname
\newcommand{\Authornote}[2]{{\sf\small\color{red}{[#1: #2]}}}
\newcommand{\Authoredit}[2]{{\sf\small\color{red}{[#1]}\color{blue}{#2}}}
\newcommand{\Authorcomment}[2]{{\sf \small\color{gray}{[#1: #2]}}}
\newcommand{\Authorfnote}[2]{\footnote{\color{red}{#1: #2}}}
\newcommand{\Authorfixme}[1]{\Authornote{#1}{\textbf{??}}}
\newcommand{\Authormarginmark}[1]{\marginpar{\textcolor{red}{\fbox{
#1:!}}}}
\newcommand{\Authornote}[2]{}
\newcommand{\Authoredit}[2]{}
\newcommand{\Authorcomment}[2]{}
\newcommand{\Authorfnote}[2]{}
\newcommand{\Authorfixme}[1]{}
\newcommand{\Authormarginmark}[1]{}
\newcommand\calD{\mathcal{D}}
\newcommand{\inparen}[1]{\left(#1\right)}             
\newcommand{\inbraces}[1]{\left\{#1\right\}}           
\newcommand{\insquare}[1]{\left[#1\right]}             
\newcommand{\suchthat}{\;\middle\vert\;}
\newcommand{\from}{:}
\newcommand\pr{\mathop{\mbox{\bf Pr}}}
\newcommand\av{\mathop{\mbox{\bf E}}}
\newcommand\var
\newcommand{\Ex}[2]{\av_{{#1}}\left[{#2}\right]}
\def\abs#1{\left| #1 \right|}
\newcommand{\norm}[1]{\ensuremath{\left\lVert #1 \right\rVert}}
\newlength{\pgmtab}  
\newcounter{lecnum}
\newlength{\tpush}
\newcommand{\draftbox}{\begin{center}
  \fbox{%
    \begin{minipage}{2in}%
      \begin{center}%
          \large\textsc{Working Draft}\\%
        Please do not distribute%
      \end{center}%
    \end{minipage}%
  }%
\end{center}
\vspace{0.2cm}}
\newcommand{\draftbox}{}
\newcommand{\Mnote}{\Authornote{M}}
\newcommand{\Snote}{\Authornote{S}}
\newcommand{\I}{\mathbb{I}}
\newcommand{\cart}{\Box}
\newcommand\Ent{\mathop{\mbox{\bf Ent}}}
\newcommand\vol{\mathop{\mbox{Vol}}}
\newcommand\Gk{G^{\cart k}}
\newcommand\opt{\mathop{\mbox{Opt}}}
\newcommand{\varj}[1]{{\mbox{\bf Var}_{#1}}}
\newcommand{\e}{\epsilon}
\newcommand{\SC}{\textsf{Sparsest Cut}}
\newcommand{\defeq}{\stackrel{\textup{def}}{=}} 
\newcommand{\eps}{\varepsilon}
\renewcommand{\epsilon}{\varepsilon}
\newcommand{\nfrac}{\nicefrac}
\renewcommand{\R}{\rea}
\title{Cuts in Cartesian Products of Graphs}
\author{ Sushant Sachdeva\thanks{Research Fellow, Simons Institute for
    the Theory of Computing. UC Berkeley, USA. Part of this work was
    done when this author was a graduate student at the Department of
    Computer Science, Princeton University. Email: {\tt
      sachdeva@eecs.berkeley.edu}}
\and 
Madhur Tulsiani \thanks{
Toyota Technological Institute at Chicago. \texttt{madhurt@ttic.edu}}
}
\begin{document}
\maketitle

\draftbox

\begin{abstract}
The $k$-fold Cartesian product of a graph $G$ is defined as a graph on
$k$-tuples of vertices, where two tuples are connected if they form an
edge in one of the positions and are equal in the rest. Starting with
$G$ as a single edge gives $\Gk$ as a $k$-dimensional hypercube. We
study the distributions of edges crossed by a cut in $\Gk$ across the
copies of $G$ in different positions. This is a generalization of the
notion of \emph{influences} for cuts on the hypercube.

\smallskip We show the analogues of results of Kahn, Kalai, and Linial
(KKL Theorem \cite{KahnKL88}) and that of Friedgut (Friedgut's Junta
theorem \cite{Friedgut98}), for the setting of Cartesian products of
arbitrary graphs. Our proofs extend the arguments of
Rossignol~\cite{Rossignol06} and of Falik and
Samorodnitsky~\cite{FalikS07}, to the case of arbitrary Cartesian
products.
We also extend the work on studying isoperimetric constants for these
graphs \cite{HoudreT96, ChungT98} to the value of semidefinite
relaxations for edge-expansion.  We connect the optimal values of the
relaxations for computing expansion, given by various semidefinite
hierarchies, for $G$ and $\Gk$.
\end{abstract}

\Snote{To do:
\begin{enumerate}
\item Should we give the def for regular graphs, and then say
  reversible markov chains in the abstract?
\end{enumerate}
}


\section{Introduction}
The Cartesian product of two graphs $G$ and $H$ is defined as the
graph $G \cart H $ on the vertex set $V(G) \times V(H)$, with the
tuples $\{ (i_1, i_2), (j_1,j_2)\}$ forming an edge if $\{i_1,j_1\}$
forms an edge in $G$ and $i_2 = j_2$, or $\{i_2,j_2\}$ forms an edge
in $H$ and $i_1 = j_1$.  The $k^\textrm{th}$ power of a graph according to
this product is defined by associativity as $G^{\cart k} = G \cart
G^{\cart (k-1)}$. The notion is a well known and well-studied one in
graph theory (see \cite{Vizing63}) for example.
\Snote{The cartesian product is for regular graphs. Should we directly
define $\Gk$ instead?}

Certain special cases of this product are particularly interesting to
consider. For example, when $G$ is just an edge, $G^{\cart k}$ is the
$k$-dimensional hypercube on $2^k$ vertices, which is perhaps the best
known example. Starting with $G$ as the $n$-cycle, $G^{\cart k}$ gives
the $k$-dimensional torus. Similarly, starting from a path yields a
grid.

Isoperimetric questions for Cartesian products of graphs have been
studied by various authors.  Houdr\'{e} and Tetali \cite{HoudreT96}
compute various isoperimetric invariants for the analogue of this
notion in the case of Markov chains, where the product of two Markov
chains is defined by a process which randomly selects one of the two
chains and make a transition according to it. \Snote{Define the
  $k$-fold product.}  Chung and Tetali
\cite{ChungT98} considered the combinatorial version and gave a
combinatorial proof of the bound on the conductance of Cartesian
products in terms of those of the starting graph (see references in
\cite{ChungT98} for previous work on special cases).

We study the question of extending some of the isoperimetric
inequalities known for the case of the hypercube in terms of the
\emph{influences} of variables, to the case of Cartesian products of
graphs. For a Boolean function on the $k$-dimensional hypercube, the
influence of the function in the $i^{\textrm{th}}$ coordinate is
defined as the probability over a random input that changing the
$i^{\textrm{th}}$ bit changes the value of the function. The total
influence of the function is the sum of the influences along all the
$k$ coordinates. Viewing a Boolean function as cut on the hypercube,
the influence along the $i^{\textrm{th}}$ coordinate is simply the
fraction of the edges along the the $i^{\textrm{th}}$ direction (which
correspond to changing the $i^{\textrm{th}}$ bit) that are crossed by
the cut. The total influence corresponds (after scaling) to the total
number of edges crossed.  Stated in this way, both the above
definitions have obvious extensions to the $k$-fold Cartesian product
of an arbitrary graph.

We consider the theorems of Kahn, Kalai, and Linial \cite{KahnKL88}
and that of Friedgut \cite{Friedgut98}, which are proved via Fourier
analysis for the hypercube, and generalize them to the case of
Cartesian products of arbitrary graphs.  We also consider applications
of these products to integrality gaps for linear and semidefinite
programming relaxations.

\paragraph{The KKL theorem.}
The theorem of Kahn, Kalai, and Linial \cite{KahnKL88}, which
introduced various tools in discrete Fourier analysis to Computer
Science, states that for the $k$-dimensional hypercube, a Boolean
function with variance $v$ has influence at least $\Omega(v \log k/k)$
along some coordinate.

We consider a generalization where we have a product graph $G^{\cart
  k}$, and the influence in the $i^{\textrm{th}}$ coordinate is
defined as the probability the function changes (i.e. one lands on the
other side of the of the cut) when taking a random step according to
$G$ in the $i^{\textrm{th}}$ direction. Informally, we prove the
following analogous statement for the case of Cartesian product of a
graph $G$.
\begin{theorem}[Generalized KKL Theorem, Informal]
  Given $f \from V(\Gk) \to \{-1,1\}$ with variance $v$, at least one
  of the coordinates coordinate has influence $\Omega\left(\alpha
    \cdot v \cdot (\log k)/k\right).$
\end{theorem}
Here, $\alpha$ is the log-Sobolev constant of the graph $G$, which is
a certain isoperimetric constant related to the mixing time for a
random walk on the graph.  We also discuss the tightness of these
results in Section~\ref{sec:tightness}. A similar KKL theorem for this
class of graphs was also obtained independently by Cordero-Erausquin
and Ledoux~\cite{Corder-EL11}, and a detailed comparison is included
later in this section. We also discuss several other results
generalizing the KKL theorem in different ways.  

\paragraph{Friedgut's junta theorem.} 
Friedgut's junta theorem states that if a Boolean function $f$ has
total influence $\I$, then there exists a Boolean function $g$
depending only on $\exp(O(\I/\e))$ coordinates such that
$\pr_x[f(x)\neq g(x)] \le \eps.$ Informally, we prove the following
analogous statement for the case of Cartesian products.
\begin{theorem}[Generalized Friedgut's Theorem, Informal]
Given $f \from V(\Gk) \to \{-1,1\}$ with total influence $\I$, there
exists a Boolean function $g$ depending only on $\exp(O(\I/\alpha\e))$
coordinates, such that $\pr_x[f(x)\neq g(x)] \le \eps.$
\end{theorem}
As before, $\alpha$ is the log-Sobolev constant of $G$.

\paragraph{Other generalizations of influences and related works.}
As mentioned before, the KKL theorem has been generalized in several
directions. In particular, Bourgain et al. \cite{BourgainKKKL}
considered the case when the function is a Boolean function over the
domain $[0,1]^n$. 
Mossel \cite{Mosselnotes} presents a proof for a function defined over
a product of $n$ finite probability spaces.
Keller et al., in~\cite{KellerMS12}, give a new definition of
influence in product spaces of continuous distributions, and prove
analogues to both the KKL theorem and Friedgut's theorem under the new
definition. They also explore several applications of these for
continuous and discrete probability spaces in \cite{KellerMS12b}.

In~\cite{Keller10}, Keller provides a family of definitions for
influence when the function is defined over the domain $[0,1]^n$, and
proves a generalization of the KKL theorem for these definitions. His
definitions consider a very general notion of influence of the
$i^{th}$ coordinate. As opposed to the usual notions, which define
influence by considering the variance in $f$ by changing the input in
the $i^{th}$ coordinate, his definition allows one to consider any
function $h$ of the \emph{expected value of the function over
  different inputs in the $i^{th}$ coordinate}\footnote{Note that this
  notion of influence is somewhat different from the one we
  consider. In our case, varying the input in the $i^{th}$ coordinate
  corresponds to varying the input over the copy of the graph $G$
  corresponding to the $i^{th}$ coordinate. While Keller's notion
  depends on the \emph{expected} value of the function over all the
  vertices of $G$, we are interested in how the function varies across
  the edges of $G$.}.

\Mnote{Added more text and a footnote comparing Keller's notion of influence to ours.}
 
Also, O'Donnell and Wimmer \cite{O'DonnellW09} obtained a
KKL theorem for a sub-class of Schreier graphs, which may not
necessarily have a product structure. 

The result most directly related to ours is the recent and independent work of
Cordero-Erausquin and Ledoux~\cite{Corder-EL11}. The authors prove a
generalization of the KKL theorem that, in particular, implies our
result for graph products, and the result of O'Donnell et
al.~\cite{O'DonnellW09} for certain classes of Schreier
graphs. However, to the best of our knowledge, their results do not
imply our generalization of the Friedgut's Junta theorem.

The proof in~\cite{Corder-EL11} builds upon the work
of~\cite{O'DonnellW09}, and is based on the hypercontractive
inequality for a Markov semigroup. The original proof of the
Friedgut's theorem is also based on hypercontractivity. In contrast,
our proofs are based on the log-Sobolev inequality, and generalize the
proofs of these theorems given by Rossignol \cite{Rossignol06} and by
Falik and Samorodnitsky \cite{FalikS07}.

\paragraph{Applications to integrality gaps.}
It follows from the results in \cite{HoudreT96} and \cite{ChungT98}
that if the starting graph $G$ has edge-expansion $h$, then the
product $G^{\cart k}$ has edge-expansion $h/k$. The same also holds
for the spectral gap of $G$ and $G^{\cart k}$, which is also the
optimum of the basic semidefinite program (SDP) for \SC. This
immediately implies that if one has a finite instance with integrality
gap $K$ for the basic SDP, then using Cartesian products it gives an
infinite family of arbitrarily large instances with the same gap.

We show that above is also the case for various hierarchies of linear
and semidefinite relaxations for \SC. In particular, if the optimum of
such a relaxation obtained by $r$ levels is $\textsf{Opt}$ for $G$,
then it is $\textsf{Opt}/k$ for $G^{\cart k}$. Most ways of increasing
the size of a graph seem to alter the expansion of the graph.
However, because of the above observation, Cartesian products provide
the right way of ``padding'' integrality gap instances to arbitrarily
large size while preserving the gap. We present the (simple) proofs of
these results in Appendix \ref{appendix:applications}.


\section{Preliminaries and Notation}
For simplicity, for most of the paper, we will work with simple,
unweighted, regular graphs. All the results and proofs in the paper
can be extended to the case of general undirected graphs (or arbitrary
reversible Markov chains) by carefully picking the right
definitions. We provide the details in Section~\ref{sec:irregular}.

Let $[k]$ denote the set $\{1,\ldots,k\}.$ Given a graph $G,$ denote
its vertex set by $V(G),$ and its edge set by $E(G).$ First, we
formally define the Cartesian product of two graphs.
\begin{definition}[Cartesian product]
Given two graphs $G$ and $H$, their Cartesian product $G \cart H$ is
defined as a graph with the vertex set $V(G) \times V(H)$ and the
following set of edges,
\begin{align*}
E(G \cart H) \defeq \inbraces{ \inparen{(i_1,i_2),(j_1,j_2)} \suchthat
\insquare{\inparen{(i_1,j_1) \in E(G)} \wedge (i_2 = j_2) } \bigvee
\insquare{(i_1 = j_1) \wedge \inparen{(i_2,j_2) \in E(H)} } }.
\end{align*}
For a graph $G$, we define $G^{\cart 1} \defeq G$ and $G^{\cart k} \defeq
G^{\cart (k-1)} \cart G$.
\end{definition}
%
\Snote{Should we just define $\Gk$ instead?}

For the rest of this section, fix a positive integer $k,$ and let $G$
be a simple, unweighted, $d$-regular graph on $n$-vertices. Then, for
all $k,$ $\Gk$ is a simple, unweighted, $kd$-regular graph. The vertex
set of $\Gk$ is $V(G)^k$. For a vertex $x \in V(\Gk)$, we will use the
notation $x = (x_1,\ldots,x_k)$. We can think of each edge in $\Gk$ to
be along a \emph{coordinate} $j$, e.g. if $(x,y)$ is an edge, $x_i =
y_i$ for all $i \neq j$, we say $(x,y)$ is an edge along coordinate
$j$. We denote the set of all edges along coordinate $j$ as
$E_j(\Gk)$.

Let $\pi$ be the uniform distribution on $V(G)$. We define the inner
product and norms for the space of functions $V(G) \to \R$ as follows:
\[\pair{f,g} \defeq \av_{x\sim \pi} [f(x)g(x)] , \quad \norm{f}_2^2 \defeq
\pair{f,f} = \av_{x \sim \pi} [f(x)^2], \text{ and } 
\norm{f}_1 \defeq \Ex{x \sim \pi}{\abs{f(x)}} .\] 
The variance of a function $f: V(G) \to \R$ is defined with respect to
the same distribution, as 
\[\var(f) \defeq \Ex{x\sim \pi}{f(x)^2} -
\inparen{\Ex{x \sim \pi}{f(x)}}^2.\] Similarly, for the space of
functions $V(\Gk) \to \R,$ all the above notions are defined using the
uniform distribution over $V(\Gk),$ however we will use the same
notation for convenience, and the corresponding space of functions will
be clear from the context.

Let $\V{L_G}$ be the normalized Laplacian for the graph $G,$ $\V{L_G}
\defeq \V{I} - \frac{1}{d}\V{A},$ where $\V{I}$ denotes the identity
matrix, and $\V{A_G}$ denotes the combinatorial adjacency matrix of
$G.$ For any $f : V(G) \to \R,$ we have,
\begin{equation*}
\pair{f,\V{L_G}f} = \frac{1}{nd} \sum_{ (x,y) \in E(G)}
(f(x)-f(y))^2 = \frac{1}{2}\av_{(x,y) \in E(G)} (f(x)-f(y))^2 .
\end{equation*}
This immediately implies that $\V{L_G}$ is a positive semi-definite
operator.
For the graph $\Gk$, we also define the directional Laplacian
$\V{L_j}$ which only considers edges along the $j^{\textrm{th}}$ coordinate,
\begin{equation}
\label{eq:def-Lj}
\V{L_j} \defeq \V{I}\otimes \ldots \otimes \V{I}\otimes \V{L_G} \otimes
\V{I} \otimes \ldots \otimes\V{I},
\end{equation} which is a $k$-fold tensor with the matrix $\V{L_G}$ is in
the $j^\textrm{th}$ position. Thus, for any $f : V(\Gk) \to \R,$
\[\pair{f,\V{L_j}f} = \frac{1}{2}\av_{(x,y) \in E_{ j}(\Gk)}
(f(x)-f(y))^2 ,\]
where the average is taken over edges along coordinate $j.$ It is easy
to check that the Laplacian for $\Gk$ is $\V{L_{\Gk}} = \frac{1}{k}
\sum_j \V{L_j} = \av_j \V{L_j}$. The directional Laplacian also gives
the definition of influence for Cartesian products.
\begin{definition}[Influence]
For a boolean function $f: V(G^{\cart k}) \to \{-1,1\}$, we define its
influence along the $j^{\textrm{th}}$ coordinate as the quantity $\pair{f,
\V{L_j} f}$.
\end{definition}
Note that this definition is off by a factor
of 2 from the usual definition of influence for a boolean function on
the hypercube.  We also define the variance of the function along the
$j^{\textrm{th}}$ coordinate as below.  Here $x \setminus \{x_j\}$ denotes the
tuple $(x_1,\ldots,x_{j-1},x_{j+1},\ldots,x_k)$.
\begin{definition}[Variance along $j^\textrm{th}$ coordinate]
For a function $f: V(G^{\cart k}) \to \R$, its variance along the
$j^{\textrm{th}}$ coordinate is defined as $\varj{j}(f) \defeq \Ex{x \setminus
\{x_j\}}{\Ex{x_j}{f(x)^2} - (\Ex{x_j}{f(x)})^2}$.
\end{definition}
Letting $\V{J}$ denote the $n \times n$ matrix with all ones, define
the operator $\V{K_j}$ as the following $k$-tensor,
\begin{equation}
\label{eq:def-Kj}
\textstyle
\V{K_j} \defeq \V{I}\otimes \ldots \otimes \V{I} \otimes
\left(\V{I}-\frac{1}{n}\V{J}\right) \otimes \V{I} \otimes \ldots
\otimes\V{I},
\end{equation}
where the matrix $\V{I}-\frac{1}{n}\V{J}$ is in the $j^{\textrm{th}}$
position. We make the following simple observation.
\begin{claim}
For any $f : V(\Gk) \to \R$ and any $j \in [k],$ we have $\varj{j}(f)
= \pair{f, \V{K_j} f}$.
\end{claim}
\Snote{Do we need a proof of the claim? } For a boolean function on
the hypercube, $\pair{f,\V{L_j}f} = 2{\varj{j}(f)},$ and hence the two
definitions are essentially equivalent in this case. However, for a
general graph $G$, the variance does not depend of the structure of
the graph, whereas our notion of influence does.

\subsection{Isoperimetric Constants of a graph}

\paragraph{Conductance.}
Given a set $S \subseteq V(G)$, we define the volume of the set,
$\vol(S)$ to be the fraction of the vertices contained in $S$
i.e. $\vol(S) \defeq \nicefrac{|S|}{|V(G)|}$. We define the
Conductance of a graph $\Phi(G)$, as follows
\[\Phi(G) \defeq \min_{\small \substack{S \subset V(G) \\ S \neq
\emptyset, V(G)}} \frac{1}{4}\frac{|E(S,\bar{S})|}{|E|}
\frac{1}{\vol(S)\vol(\bar{S})}.\] The factor of $\nicefrac{1}{4}$
ensures that $\Phi(G) \le 1$.  If we consider the $\{-1,1\}$-valued
indicator function of a set $S$, we get an equivalent definition of
$\Phi(G)$ as follows,
\[\Phi(G) = \min_{\small \substack{f \from V(G) \to \{-1,1\} \\
    \var(f) \neq 0}}
\frac{\pair{f,\V{L_G}f}}{2\var(f)}.\]
This implies that if $f$ is any $\{-1,1\}$-valued function,
\[\pair{f,\V{L_G}f} ~\ge~ 2 \Phi(G) \cdot \var(f).\] Since
$\V{L_G}$ is positive semi-definite, this holds even if $\var(f) = 0.$
This is also true if $f$ is a $\B$-valued indicator function for a
set. 

We know that $\Gk$ consists of several copies of $G$ along each
coordinate. Hence, a similar statement holds for the directional
Laplacians $\V{L_j},$ as proved in the following straightforward
lemma.
\begin{lemma}
\label{lem:Lj-conductance}
For any $f : V(\Gk) \to \{-1,1\},$ and $j \in [k],$ we have
$\pair{f,\V{L_j}f} \ge 2\Phi(G)\cdot \varj{j}(f).$
\end{lemma}
\begin{proof}
Without loss of generality, let $j=1.$ If we fix $x\backslash \{x_1\}
\in V(G)^{k-1},$ and consider all $x_1 \in V(G)$, we get a copy of
$G.$ Denote the restriction of $f$ to these vertices as
$f_{x\backslash \{x_1\}}$. On this copy of the graph, we know that,
\[\pair{f_{x\backslash \{x_1\}},\V{L_G}f_{x\backslash \{x_1\}}} \ge 2 \Phi(G)\cdot
\var_{x_1}(f_{x\backslash \{x_1\}}) . \]
Averaging the above equation over all $x\backslash \{x_1\}$, we get that,
\[\pair{f,\V{L_1}f}  = \av_{x\backslash \{x_1\}} \pair{f_{x\backslash
    \{x_1\}},\V{L_G}f_{x\backslash \{x_1\}}} \ge 2\Phi(G)\cdot
\av_{x\backslash \{x_1\}} \var_{x_1}(f_{x\backslash \{x_1\}}) = 2\Phi(G) \cdot
\varj{1}(f) \vspace{-11pt}.\]
\end{proof}
\paragraph{Eigenfunctions and Eigenvalues.}
Since $\V{L_G}$ is a symmetric operator (more precisely, it is
self-adjoint on the space of functions under consideration), it has
real eigenvalues $\lambda_0 = 0 \le \lambda_1 \le \ldots \le
\lambda_{n-1}.$ We fix a basis of eigenfunctions $v_0 =
\mathbbm{1},v_1,\ldots,v_{n-1},$ that is orthonormal, \emph{i.e.},
$\pair{v_i,v_j} = 1$ iff $i = j$ and 0 otherwise, and such that
$\V{L_G}v_i = \lambda_i v_i.$ It is well known that the eigenfunctions
of $\V{L_{\Gk}}$ are tensor products of the eigenfunctions of $\V{L_G}$.
\begin{proposition}[Eigenvalues, Eigenfunctions]
\label{prop:Gk-eigen}
Let $v_0,\ldots,v_{n-1}$ be an orthonormal basis of eigenfunctions for
$\V{L_G},$ with eigenvalues $\lambda_0,\ldots,\lambda_{n-1}.$ Then,
for every positive integer $k,$ the set of vectors $\{v_{i_1}\otimes
\ldots\otimes v_{i_k}\}_{i_1,\ldots,i_k \in \{0,\ldots,n-1\}}$ form an
orthonormal basis of eigenvectors for $\V{L_{\Gk}},$ where
$v_{i_1}\otimes \ldots\otimes v_{i_k}$ has eigenvalue $\av_j
\lambda_{i_j}.$
\end{proposition}
\begin{proof}
Fix a sequence $i_1,\ldots,i_k,$ and the vector $v_{i_1}\otimes
\ldots\otimes v_{i_k}.$ Thus,
\begin{align*}
\V{L_{\Gk}} (v_{i_1}\otimes \ldots\otimes v_{i_k}) & =
\left( \av_j \V{L_j} \right) (v_{i_1}\otimes \ldots\otimes v_{i_k})  =
\av_j \left( \V{L_j} (v_{i_1}\otimes \ldots\otimes   v_{i_k}) \right)\\
& = \av_j (v_{i_1} \otimes \ldots \otimes \V{L_G} v_{i_j} \otimes \ldots
\otimes v_{i_k}) \\
& = \av_j (v_{i_1} \otimes \ldots \otimes \lambda_{i_j} v_{i_j} \otimes
\ldots \otimes v_{i_k}) = \left( \av_j \lambda_{i_j} \right) v_{i_1}
\otimes \ldots \otimes v_{i_k}.
\end{align*}
Thus, $v_{i_1}\otimes \ldots \otimes v_{i_k}$ is an eigenfunction with
eigenvalue $\av_j \lambda_{i_j}.$ Moreover, 
\[\pair{v_{i_1} \otimes \ldots \otimes v_{i_k}, v_{l_1}\otimes \ldots
  \otimes v_{l_k}} = \pair{v_{i_1},v_{l_1}} \ldots
\pair{v_{i_k},v_{l_k}},\]
which is 0 unless $i_j = l_j$ for all $j,$ in which case it is 1. Thus
they are orthonormal, and by a dimensionality argument, they form a
basis.
\end{proof}

Letting $(i)$ denote the sequence $(i_1,\ldots,i_k)$, we denote the
eigenfunction $v_{i_1}\otimes v_{i_2}\otimes\ldots\otimes v_{i_k}$ by
$v_{(i)}$. 
%
%
%
\paragraph{Log-Sobolev Constant.} For a function $f : V(G) \to \R$, we
define the entropy of the function as follows,
\begin{align*}
\Ent(f^2) &~\defeq~ \av_{x \sim \pi} [f(x)^2\log f(x)^2] - (\av_{x
\sim \pi} [f(x)^2])\log \av_{x \sim \pi} [f(x)^2]\\
&~=~ \av_{x \sim \pi} [f(x)^2\log f(x)^2] - \norm{f}_2^2\log \norm{f}_2^2.
\end{align*}
where $\log$ is the natural logarithm.
\begin{definition}[Log-Sobolev Constant]
The log-Sobolev constant of a graph $G$ is defined to be the largest
constant $\alpha(G)$ \footnote{Often in the literature, \emph{e.g.}
in~\cite{DiaconisS96}, the log-Sobolev constant is defined to be twice
the definition we use.} such that the following inequality holds for
all functions $f : V(G) \to \R$,
\begin{equation}
\label{eq:log-sob}
\pair{f,\V{L_G}f} ~=~ \frac{1}{2} \av_{(x,y) \in E(G)} (f(x)-f(y))^2
~\ge~ \frac{\alpha(G)}{2} \cdot \Ent(f^2).
\end{equation}
\end{definition}
The above inequality is called the \emph{log-Sobolev inequality} for
graph $G.$ The following lemma relates the log-Sobolev constant of
$\Gk$ to that of $G$.
\begin{lemma}[Lemma 3.2, Diaconis and Saloff-Coste \cite{DiaconisS96}]
\label{lem:alpha-Gk}
Let $\alpha(G)$ be the log-Sobolev constant for a graph $G$, then the
log-Sobolev constant for $\Gk$ is $\alpha(G)/k$.
\end{lemma}
It is known that the isoperimetric constants defined above satisfy the
following inequalities between them (see Lemma 3.1 in
\cite{DiaconisS96} for example),
\[\alpha(G) ~\le~ \lambda_1(G) ~\le~ 2\Phi(G).\]


\section{KKL Theorem and Friedgut's Junta Theorem}
In this section, we shall prove the analogues of the theorems of Kahn,
Kalai, and Linial \cite{KahnKL88}, and Friedgut \cite{Friedgut98} for
Cartesian products of graphs. Both these theorems analyze cuts in the
hypercube which is simply the Cartesian product of an edge. The proofs
of both theorems proceed by using hypercontractivity of the
Bonami-Beckner noise operator on the hypercube.

While the noise operator can be easily generalized to the setting of
Cartesian products, the hypercontractivity based proofs do not seem to
extend easily to the setting of Cartesian products of general
graphs. Instead we develop on Rossignol's proof of the KKL theorem
\cite{Rossignol06}, which is based on the log-Sobolev inequality. For
the KKL and Friedgut theorems on the hypercube, proofs using the
log-Sobolev inequality were also given by Falik and Samorodnitsky
\cite{FalikS07}.

To prove both the theorems, we shall need some preparatory lemmas. We
develop these below. The manipulations are similar to those in
\cite{Rossignol06}. For the rest of this section, fix a simple,
regular and unweighted graph $G,$ and a positive integer $k.$ All the
results in this section hold for any such $G$ and $k.$


Let $f : V(\Gk) \to \{-1,1\}$ define a cut in the graph $\Gk$. Let $f$
be represented in the basis of the eigenfunctions of $\V{L_{\Gk}}$ as $f =
\sum_{(i)} \widehat{f}_{(i)} v_{(i)}$ where $\widehat{f}_{(i)} \defeq
\pair{f,v_{(i)}}$. 
\begin{lemma}
\label{lem:vj-var-comparison}
For any $f : V(\Gk) \to \R,$ with the eigenbasis representation $f =
\sum_{(i)} \widehat{f}_{(i)} v_{(i)},$ we have $\var(f) = \sum_{(i)
  \neq 0} \widehat{f}_{(i)}^2, \textrm{ and } \varj{j}(f) =\sum_{(i):
  i_j \neq 0} \widehat{f}_{(i)}^2.$ In particular, this implies
\footnote{The second half of the inequality is the well-known
  Efron-Stein inequality (see \cite{Steele86} for example).}
\begin{equation}
\label{eq:vj-var-comparison}
\textstyle
\max_j \varj{j}(f) \le \var(f) \le \sum_{j \in [k]} \varj{j}(f).
\end{equation}
\end{lemma}
\begin{proof}
We have $f = \sum_{(i)} \widehat{f}_{(i)} v_{(i)}.$ Thus,
\[\textstyle \var(f) = \av_{x}{f(x)^2} - \left(\av_{x}f(x)\right)^2 =
\norm{f}_2^2 - \pair{v_{(0)},f}^2 = \sum_{(i)} \widehat{f}_{(i)}^2 -
\widehat{f}_{(0)}^2 =\sum_{(i) \neq 0} \widehat{f}_{(i)}^2, \] where
we have used the fact that $\{v_{(i)}\}_{(i)}$ is an orthonormal
basis, and $v_{(0)} = \mathbbm{1}.$ Also, observing
that for any $j \in [k]$ and any tuple $(i),$ $\V{K_j} v_{(i)} = v_{(i)}$
iff $i_j \neq 0$ and $0$ otherwise,
\[\varj{j}(f) = \pair{f,\V{K_j}f} = \pair{\sum_{(i)}
  \widehat{f}_{(i)}v_{(i)}, \sum_{(i)}
  \widehat{f}_{(i)} \V{K_j}v_{(i)} } = \pair{\sum_{(i)}
  \widehat{f}_{(i)}v_{(i)}, \sum_{(i): i_j \neq 0}
  \widehat{f}_{(i)}v_{(i)} } = \sum_{(i): i_j \neq 0}
\widehat{f}_{(i)}^2.\]
From these representations, Equation~\eqref{eq:vj-var-comparison}
follows immediately.
\end{proof}

For $j \in [k],$ define the functions $f_j$ as follows:
\[f_j ~\defeq~ \sum_{(i):i_j \neq 0, i_l = 0\ \forall l > j}
\widehat{f}_{(i)} v_{(i)} ~=~ \Ex{x_{j+1},\ldots,x_k} {\V{K_j} f}
.\] 
\begin{lemma}[Basic Properties of $\{f_j\}_{j \in [k]}$]
\label{lem:fj-prop}
  The functions $\{f_j\}_{j \in [k]}$ defined above satisfy:
\begin{enumerate}[itemsep=0pt]
\item For $j_1,j_2 \in [k]$ such that $j_1 \neq j_2,$
  $\pair{f_{j_1},f_{j_2}} = \pair{f_{j_1},\V{L_{\Gk}}f_{j_2}} = 0.$
\item $ \sum_{j \in [k]} \pair{f_j,\V{L_{\Gk}}f_j}  =
  \pair{f,\V{L_{\Gk}}f}.$
\item $\sum_{j \in [k]} \norm{f_j}_2^2 = \var(f).$
\end{enumerate}
\end{lemma}
\begin{proof}
We note that the functions $\{f_j\}_{j \in [k]}$ are projections of
$f$ onto orthogonal eigenspaces of $\V{L_{\Gk}}$. This implies that 
these functions are orthogonal, i.e. $\pair{f_{j_1},f_{j_2}} = 0$ for
$j_1 \neq j_2,$ and also $\pair{f_{j_1},\V{L_{\Gk}}f_{j_2}} = 0$ for
$j_1 \neq j_2$. This also implies,
\[\textstyle \sum_{j} \pair{f_j,\V{L_{\Gk}}f_j} = \pair{   \sum_{j }
  f_j ,\V{L_{\Gk}} \sum_{j} f_j } = \pair{  \widehat{f}_{(0)}v_{(0)}+
  \sum_{j } f_j ,\V{L_{\Gk}} \left( \widehat{f}_{(0)}v_{(0)} +
    \sum_{j} f_j \right)} = \pair{f,\V{L_{\Gk}}f},\]
where we have used the fact that $v_{(0)}$ is orthogonal to all $f_j,$
and $\pair{v_{(0)},\V{L_{\Gk}}v_{(0)}} = 0.$  

Finally, using the definition of $f_j's$ and the fact that
$\{v_{(i)}\}_{(i)}$ are orthonormal, 
\[\sum_{j \in [k]} \norm{f_j}_2^2 = \sum_{j \in [k]} \pair{f_j,f_j} =  \sum_{j \in [k]} \sum_{(i):i_j
  \neq 0, i_l = 0\ \forall l > j} \widehat{f}_{(i)}^2 = \sum_{(i)
  \neq 0} \widehat{f}_{(i)}^2 = \var(f),\]
by Lemma~\ref{lem:vj-var-comparison}. 
\end{proof}

Next, we bound the norms of the functions $f_j$.
\begin{lemma}[Norm Bounds for $f_j$]
\label{lem:norm-bounds}
For all $j \in [k],$ the $\ell_2$ and $\ell_1$ norms of the functions
$f_j$ are bounded as follows: $\norm{f_j}_2^2 ~\le~ \varj{j}(f) ~=~
\pair{f,\V{K_j}f},$ and $ \norm{f_j}_1 ~\le~ \varj{j}(f).$
\end{lemma}
\begin{proof}
For the first part, 
\[\norm{f_j}_2^2  = \sum_{(i):i_j \neq 0, i_l = 0\ \forall l > j}
\widehat{f}^2_{(i)} \le \sum_{(i):i_j \neq 0} \widehat{f}^2_{(i)} = \varj{j}(f)
= \pair{f,\V{K_j}f}.\]

For the second part, we start with the triangle inequality to upper
bound $\abs{f_j (x)}$
\begin{align*}
 \norm{f_j}_1 = \Ex{x}{\abs{f_j(x)}} = \Ex{x}{\abs{\Ex{x_{j+1},\ldots,x_k}{\V{K_j}
f(x)}}}
& ~\le~ \Ex{x} {\abs{\V{K_j} f(x)}} \\
& ~=~ \Ex{x} {\abs{f(x)-\Ex{y ~:~ y_i = x_i \forall i \neq j }{ f(y)} } } \\
& ~=~ \Ex{x} {f(x) \cdot \inparen{f(x)-\Ex{y ~:~ y_i = x_i \forall i
\neq j}{f(y)} }}\\
& ~=~ \pair{f,\V{K_j}f} ~=~ \varj{j}(f),
\end{align*}
where we used the observation that the sign of $f(x)-\av_{y} f(y)$
is the same as $f(x)$, since $f(x) \in \{-1,1\},$ and $\abs{\Ex{y ~:~
    y_i = x_i \forall i \neq j }{ f(y)}} \le 1.$
\end{proof}

We shall apply the log-Sobolev inequality to the functions $f_j$
defined above.  However, the entropy of these functions is somewhat
difficult to work with.  The following lemma gives a different
estimate in terms of the $\ell_1$ and $\ell_2$ norms of the functions
we are applying the log-Sobolev inequality to.
\begin{lemma}
For any $t \in (0,\nfrac{1}{e^2}]$ and $h: V(\Gk) \to \R$,
\begin{equation*}
\pair{h,\V{L_{\Gk}}h} ~\ge~ \frac{\alpha(G)}{2k} \cdot \left(\sqrt{t}\log
t \cdot \norm{h}_1 + \log t \cdot \norm{h}_2^2 -
\norm{h}_2^2\log\norm{h}_2^2\right).
\end{equation*}
\end{lemma}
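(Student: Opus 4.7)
The plan is to reduce the lemma to the log-Sobolev inequality on $\Gk$ and then to a pointwise inequality on the integrand. By the result of Diaconis--Saloff-Coste quoted earlier, $\Gk$ has log-Sobolev constant $\alpha/k$, so
\[
\pair{h,\V{L_{\Gk}}h} ~\ge~ \frac{\alpha}{2k}\,\Ent(h^2) ~=~ \frac{\alpha}{2k}\Bigl(\Ex{x}{h(x)^2\log h(x)^2} - \norm{h}_2^2 \log \norm{h}_2^2 \Bigr).
\]
Matching this against the desired bound shows that the lemma reduces to the inequality
\[
\Ex{x}{h(x)^2 \log h(x)^2} ~\ge~ (\log t)\,\norm{h}_2^2 + \sqrt{t}\,(\log t)\,\norm{h}_1.
\]

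I would prove this by establishing the corresponding pointwise inequality: for every real number $u$ and every $t\in(0,1/e^2]$,
\[
u^2\log u^2 ~\ge~ (\log t)\,u^2 + \sqrt{t}\,(\log t)\,|u|,
\]
and then integrating against $\pi^{\otimes k}$. Because $\log t < 0$, the right-hand side is nonpositive, which matches the fact that $u^2 \log u^2$ can go negative only for $|u|<1$.

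The pointwise bound splits into two cases. When $u^2 \ge t$, one has $\log u^2 \ge \log t$, so $u^2\log u^2 \ge (\log t)u^2$; and the extra term $\sqrt{t}(\log t)|u|$ is $\le 0$, so the inequality follows trivially. When $u^2 < t$, after moving the $(\log t)u^2$ term to the left the inequality becomes
\[
u^2 \log\!\bigl(u^2/t\bigr) ~\ge~ \sqrt{t}\,(\log t)\,|u|,
\]
and, dividing by $|u|$ and flipping signs, it is equivalent to $|u|\log(t/u^2) \le \sqrt{t}\,|\log t|$. A one-variable calculus check shows that $g(u)=u\log(t/u^2)$ on $(0,\sqrt t]$ is maximized at $u=\sqrt t/e$ with value $2\sqrt t/e$, so it suffices that $|\log t|\ge 2/e$. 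This is where the hypothesis $t \le 1/e^2$ enters (it gives the much stronger $|\log t|\ge 2$), and it is the main obstacle in the sense that it pins down the threshold $1/e^2$ in the statement.

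Given the pointwise inequality, one takes $u=h(x)$, integrates both sides over $x\sim \pi^{\otimes k}$ to recover $\Ex{x}{h^2\log h^2} \ge (\log t)\,\norm{h}_2^2 + \sqrt t\,(\log t)\,\norm{h}_1$, and substitutes into the log-Sobolev bound displayed above to obtain the claim. No further technical step is needed beyond this calculus lemma, so the whole argument is essentially: log-Sobolev on $\Gk$ plus the pointwise estimate that converts $\Ent(h^2)$ into the mixed $\ell_1$/$\ell_2$ expression appearing in the statement.
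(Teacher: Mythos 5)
Your proposal is correct and follows essentially the same route as the paper: apply the log-Sobolev inequality for $\Gk$ with constant $\alpha/k$, then lower bound $\Ex{x}{h^2\log h^2}$ by $\sqrt{t}\log t\cdot\norm{h}_1 + \log t\cdot\norm{h}_2^2$ via a case split at $h^2 = t$. The only (minor) difference is that you package the elementary step as a single pointwise inequality verified by maximizing $u\log(t/u^2)$, whereas the paper splits the expectation with indicators and uses that $\sqrt{z}\log z$ is decreasing on $[0,t]$ for $t\le 1/e^2$, dropping the indicators since $\log t<0$; both verifications are valid.
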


\begin{proof}
Applying the log-Sobolev inequality (Equation~\eqref{eq:log-sob}) for $\Gk$ to $h$,
\begin{equation}
\label{eq:simplify-entropy:log-sob}
\pair{h,L_{\Gk}h} ~\ge~ \frac{\alpha(\Gk)}{2}\Ent(h^2) ~=~
\frac{\alpha(G)}{2k} \cdot \inparen{\Ex{x}{ h(x)^2\log h(x)^2} -
  \norm{h}_2^2\log \norm{h}_2^2},
\end{equation} where we used the definition of
$\Ent(\cdot),$ and the fact that $\alpha(\Gk) = \alpha(G)/k,$ from
Lemma~\ref{lem:alpha-Gk}.

Observe that since $t \in (0,\nfrac{1}{e^2}],$ the function $\sqrt{z}
\log z$ is decreasing in $[0,t].$ We use this to bound the first term
as below.
\begin{align*}
\Ex{x} {h^2(x)\log h(x)^2} 
&~=~ \Ex{x}{h^2(x)\log h(x)^2 \cdot \V{1}_{h^2 \le t}} + \Ex{x}{
h^2(x)\log h(x)^2 \cdot \V{1}_{h^2 > t}} \\
&~=~ \Ex{x}{\abs{h(x)} \cdot \sqrt{h(x)^2}\log h(x)^2 \cdot
\V{1}_{h^2 \le t}} + \Ex{x}{
h^2(x)\log h(x)^2 \cdot \V{1}_{h^2 > t}} \\
&~\ge~ \Ex{x}{ |h(x)| \cdot \sqrt{t}\log t \cdot \V{1}_{h^2 \le t}} +
\Ex{x}{h^2(x) \cdot \log t \cdot \V{1}_{h^2 > t}}\\
&~\ge~ \sqrt{t}\log t \cdot \Ex{x} {|h(x)|} + \log t \cdot \Ex{x}{ h^2(x)} \\
&~=~ \sqrt{t}\log t \cdot \norm{h}_1 + \log t \cdot \norm{h}_2^2,
\end{align*}
where the last inequality used the fact that $\log t < 0$ for $t \in
(0,\nfrac{1}{e^2}]$. Plugging the above bound in the inequality from
Equation~\eqref{eq:simplify-entropy:log-sob} proves the claim.
\end{proof}
Combining the above lemma with Lemma~\ref{lem:norm-bounds} gives the
following corollary, which shall be useful in the proofs of both the
theorems.
\begin{corollary} \label{cor:main}
For all $t \in \left(0,\nfrac{1}{e^2}\right.]$ and all $j \in [k],$
\begin{align*}
  \pair{f_j,\V{L_{\Gk}}f_j} & ~\ge~ \frac{\alpha(G)}{2k} \cdot
  \left(\sqrt{t}\log t \cdot \varj{j}(f) + \log t \cdot \norm{f_j}_2^2
    - \norm{f_j}_2^2\log \norm{f_j}_2^2\right).
\end{align*}
\end{corollary}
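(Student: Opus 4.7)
The plan is to apply the second lemma (the log-Sobolev estimate in terms of $\ell_1$ and $\ell_2$ norms) with $h = f_j$, and then replace $\norm{f_j}_1$ by the upper bound $\varj{j}(f)$ coming from the first lemma. The only subtlety is a sign check: the coefficient $\sqrt{t}\log t$ is \emph{negative} for $t \le 1/e^2$ (indeed $t < 1$ already suffices, since then $\log t < 0$ and $\sqrt{t} \ge 0$), so replacing $\norm{f_j}_1$ by something larger makes the term $\sqrt{t}\log t \cdot \norm{f_j}_1$ \emph{smaller}, which is what preserves the lower-bound direction.

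More concretely, I would first invoke the previous lemma with $h := f_j$, yielding
\[
\pair{f_j, \V{L_{\Gk}} f_j} ~\ge~ \frac{\alpha}{2k} \cdot \left(\sqrt{t}\log t \cdot \norm{f_j}_1 + \log t \cdot \norm{f_j}_2^2 - \norm{f_j}_2^2 \log \norm{f_j}_2^2\right).
\]
Next I would observe that for $t \le 1/e^2$ we have $\sqrt{t}\log t \le 0$, and by the preceding lemma $\norm{f_j}_1 \le \varj{j}(f)$. Multiplying this inequality by the non-positive scalar $\sqrt{t}\log t$ gives
\[
\sqrt{t}\log t \cdot \norm{f_j}_1 ~\ge~ \sqrt{t}\log t \cdot \varj{j}(f).
\]
Substituting this back into the displayed inequality yields exactly the claimed bound.

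The main step is thus the sign check, which is routine; there is no real obstacle. I expect the whole derivation to be a short chaining of the two preceding lemmas, with the only care being to note that $\sqrt{t}\log t$ is negative so that an upper bound on $\norm{f_j}_1$ translates into a lower bound on $\sqrt{t}\log t \cdot \norm{f_j}_1$. The other two terms on the right-hand side already involve only $\norm{f_j}_2^2$ and so require no further manipulation.
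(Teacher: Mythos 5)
Your proposal is correct and follows the same route the paper intends: apply the log-Sobolev estimate lemma with $h = f_j$, then use $\norm{f_j}_1 \le \varj{j}(f)$ together with the observation that $\sqrt{t}\log t \le 0$ for $t \le 1/e^2$ to replace the $\ell_1$ norm while preserving the inequality direction. The paper simply states the corollary as an immediate combination of the two preceding lemmas, and your sign check is exactly the one small point that needs verifying.
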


\subsection{KKL Theorem for Cartesian products}
We now prove the following analogue of the KKL theorem, which says
that a cut with high variance must have a somewhat large number of
edges crossing in at least one direction.
\begin{theorem}[Generalized KKL Theorem]
\label{thm:kkl}
Given $f \from V(\Gk) \to \{-1,1\}$, we have
\[
\max_j \pair{f,\V{L_j}f} ~\ge~ \Omega\left(\frac{\log k}{k} \cdot
  \alpha(G) \cdot \var(f)\right).\]
\end{theorem}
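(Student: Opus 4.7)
My plan is to follow the Rossignol/Falik--Samorodnitsky strategy in this Cartesian-product setting: decompose $f-\av f = \sum_j f_j$, apply Corollary~\ref{cor:main} to each $f_j$, sum the estimates, and use Poincar\'e's inequality for $G$ to convert the assumption that $\max_j \pair{f,\V{L_j}f}$ is small into a uniform pointwise bound on $\norm{f_j}_2^2$.

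Write $M := \max_j \pair{f,\V{L_j}f}$ and $V := \var(f)$. Because the $f_j$ lie in pairwise disjoint eigenspaces of $\V{L_{\Gk}}$, Parseval gives the two identities $\sum_j \norm{f_j}_2^2 = V$ and $\sum_j \pair{f_j,\V{L_{\Gk}}f_j} = \pair{f,\V{L_{\Gk}}f} = \frac{1}{k}\sum_l \pair{f,\V{L_l}f} \le M$. Summing Corollary~\ref{cor:main} over $j$ therefore yields
\[
M \;\ge\; \frac{\alpha}{2k}\Big[\sqrt{t}\log t \cdot \sum_j \varj{j}(f) \;+\; \log t\cdot V \;-\; \sum_j \norm{f_j}_2^2\log \norm{f_j}_2^2\Big].
\]

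Next, I would invoke Poincar\'e's inequality for $G$ after conditioning on the other coordinates: this gives $\varj{j}(f) \le \pair{f,\V{L_j}f}/\lambda_1(G) \le M/\alpha$, using $\alpha\le \lambda_1$. Together with $\norm{f_j}_2^2 \le \varj{j}(f)$ this furnishes the crucial pointwise bound $\norm{f_j}_2^2 \le M/\alpha$, and by summing, $\sum_j\varj{j}(f) \le kM/\alpha$. Plugging these into the displayed inequality above (the first term uses $\sqrt{t}\log t < 0$, and the entropy term uses $-\sum_j \norm{f_j}_2^2\log\norm{f_j}_2^2 \ge V\log(\alpha/M)$) and rearranging produces the implicit inequality
\[
M\bigl(2-\sqrt{t}\log t\bigr) \;\ge\; \frac{\alpha V}{k}\,\log(t\alpha/M).
\]

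Finally I would optimize over $t \in (0,1/e^2]$. The coefficient $2-\sqrt{t}\log t$ is maximized at the admissible endpoint $t = 1/e^2$, where it equals $2+2/e$, and the inequality becomes $M \ge \alpha V\log(\alpha/(Me^2))/((2+2/e)k)$. Writing $\mu := kM/(\alpha V)$ this reads $(2+2/e)\mu + \log\mu \ge \log k - \log V - O(1)$, which forces $\mu = \Omega(\log k)$ and hence $M = \Omega(\alpha V \log k /k)$; the edge case $M > \alpha/e^2$ is immediate since $V\log k / k$ is bounded. The main obstacle is this last step --- selecting $t$ so as not to lose the $\log k$ factor --- and the quantitative pivot of the whole argument is the Poincar\'e step, which is exactly what makes $-\sum_j \norm{f_j}_2^2\log\norm{f_j}_2^2$ large by forcing every $\norm{f_j}_2^2$ to be at most $M/\alpha$.
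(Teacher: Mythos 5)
Your proof is correct and sits within the same Rossignol/Falik--Samorodnitsky framework that the paper uses: decompose $f-\av f$ into the orthogonal pieces $f_j$, sum Corollary~\ref{cor:main} over $j$, and choose $t$. Where you diverge is in how you handle the $\sqrt t\log t\cdot\sum_j\varj j(f)$ and entropy terms. The paper sets $V_{\max}:=\max_j\varj j(f)$, substitutes $\sum_j\varj j(f)\le kV_{\max}$ and $\log\norm{f_j}_2^2\le\log V_{\max}$, chooses the \emph{data-dependent} $t=(\var(f)/(ekV_{\max}))^2$, and then splits into two cases depending on whether $V_{\max}$ is larger or smaller than $(\log k/k)\var(f)$ --- in the first case falling back on the conductance bound $\max_j\pair{f,\V{L_j}f}\ge 2\Phi\,V_{\max}$, in the second using $\max_j\pair{f,\V{L_j}f}\ge\av_j\pair{f,\V{L_j}f}$. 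You instead invoke Poincar\'e on each coordinate slice to get $\varj j(f)\le\pair{f,\V{L_j}f}/\lambda_1\le M/\alpha$ (hence also $\norm{f_j}_2^2\le M/\alpha$); this ties the problematic terms directly to the target quantity $M$, lets you absorb the $\sqrt t\log t\cdot\sum_j\varj j(f)$ term into the left-hand side, and yields a single implicit inequality in $\mu=kM/(\alpha V)$ with the fixed choice $t=1/e^2$. The payoff is a unified, case-free derivation of the $\Omega(\alpha\log k/k)$ bound; the cost is that the final step is an implicit inequality $(2+2/e)\mu+\log\mu\ge\log k-\log V-O(1)$ that must be solved, whereas the paper's explicit $t$-substitution gives a closed-form lower bound in each case. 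Both are valid; yours is arguably closer in spirit to Falik--Samorodnitsky's self-referential-inequality style than to Rossignol's.

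Two small remarks. First, you justify $t=1/e^2$ by saying the coefficient $2-\sqrt t\log t$ is maximized there --- but a larger coefficient on $M$ actually \emph{weakens} your lower bound, so this is not the operative reason. The reason $t=1/e^2$ is correct is that $\log t$ (hence the right-hand side $\tfrac{\alpha V}{k}\log(t\alpha/M)$) is maximized over the admissible range there, and that effect dominates; the coincidence that the left-hand coefficient is also extremal at $t=1/e^2$ is harmless but should not be offered as the justification. Second, the final step --- deducing $\mu=\Omega(\log k)$ from $(2+2/e)\mu+\log\mu\ge\log k-O(1)$ --- is correct but deserves one more sentence: if $\mu\le\log k$, then $\log\mu\le\log\log k$, so $(2+2/e)\mu\ge\log k-\log\log k-O(1)$ and hence $\mu=\Omega(\log k)$ for $k$ large; for bounded $k$ the statement holds trivially by adjusting the implicit constant. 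As written, "which forces $\mu=\Omega(\log k)$" elides this case analysis.
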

Note that in comparison with the usual KKL theorem for the hypercube,
our bound has an extra factor of $\alpha(G)$, which is $2$ for the
case of the hypercube where $G$ is an edge.
\begin{proof}
Let $\alpha \defeq \alpha(G)$ and $\Phi \defeq \Phi(G).$ Let $V =
\max_j \varj{j}(f)$. Using Lemma~\ref{lem:norm-bounds},
$\norm{f_j}_2^2 \le \varj{j}(f) \le V.$ Plugging this into
Corollary~\ref{cor:main}, we get that for all $j \in [k]$ and $t \in
(0,\nfrac{1}{e^2}],$
\begin{align*}
  \pair{f_j,\V{L_{\Gk}}f_j} & ~\ge~ \frac{\alpha}{2k} 
  \left(V\cdot \sqrt{t}\log t + \log t \cdot \norm{f_j}_2^2
    - \norm{f_j}_2^2\log V\right),
\end{align*}
where we have used $\log t < 0$ for $t \in (0,\nfrac{1}{e^2}].$ Adding
the above equation for all $j \in [k],$ and using $\sum_{j \in [k]}
\pair{f_j,\V{L_{\Gk}}f_j} = \pair{f,\V{L_{\Gk}}(f)}$ and $\sum_{j \in
  [k]} \norm{f_j}_2^2 = \var(f)$ from Lemma~\ref{lem:fj-prop}, we get
for any $t \in (0,\nfrac{1}{e^2}],$
\[\pair{f,\V{L_{\Gk}}f} ~\ge~ \frac{\alpha}{2k}\left(kV \cdot \sqrt{t}\log t
  + \log t \cdot \var(f) - \var(f)\log V\right).\]
In order to balance the two expressions involving $t$ in the above
equation, and still have $t \le \frac{1}{e^2}$, we pick $t =
\left(\frac{\var(f)}{ekV}\right)^2$ (since $\var(f) \le \sum_{j \in
  [k]} \varj{j}(f) \le kV$, using
Equation~\eqref{eq:vj-var-comparison}). This gives,
\[\pair{f,\V{L_{\Gk}}f} ~\ge~ \frac{\alpha\var(f)}{2k} \cdot
\left(2\left(1+\frac{1}{e}\right)\log \left(\frac{\var(f)}{ekV}\right)
  + \log \frac{1}{V}\right).\]
Suppose that $V \ge \frac{\log k}{k} \cdot \var(f)$. Then, using
$\pair{f,\V{L_j}f} \ge 2\Phi\cdot \varj{j}(f)$ from
Lemma~\ref{lem:Lj-conductance}, we get that,
\[\textstyle \max_j \pair{f,\V{L_j}f}
\ge 2\Phi \cdot \max_j \varj{j}(f) = 2\Phi V
= \Omega\left(\frac{\log k}{k} \cdot \Phi \cdot \var{f}\right).\]
Also, for the case when $V \le \frac{\log k}{k} \cdot \var{f}$, we
have,
\begin{align*}
\pair{f,\V{L_{\Gk}}f} & ~\ge~ \frac{\alpha\var(f)}{2k} \cdot
\left(2\left(1+\frac{1}{e}\right)\log \left(\frac{1}{e\log k}\right) +
\log \left(\frac{k}{\var(f)\log k}\right)\right) \\
 &~=~ \Omega\left(\frac{\log k}{k} \cdot \alpha \cdot \var(f)\right).
\end{align*}
Combining the above with the facts that $2\Phi \geq \alpha$ and
$\max_j \pair{f,\V{L_j}f} \ge \av_j \pair{f,\V{L_j}f} =
\pair{f,\V{L_{\Gk}}f}$ then proves the result.
\end{proof}
\subsection{Friedgut's theorem for Cartesian products of graphs}
Friedgut's theorem says that a cut on the hypercube which is crossed
by very few edges is close to a cut that depends only on a few
coordinates. We now prove the following analogue of Friedgut's
theorem. 
\begin{theorem}[Generalized Friedgut's Junta Theorem]
\label{thm:friedgut}
Given any $f \from V(\Gk) \to \{-1, 1\}$, $f$ is $\epsilon$-close to a
boolean function $g : V(\Gk) \to \{-1, 1\}$, i.e. $\norm{f-g}_2^2 \le
\epsilon$, which is determined only by the value of $l$ coordinates,
where,
\[l ~~\le~~ \exp\left(\frac{50k}{\eps}\frac{1}{\alpha(G)} \cdot
  \pair{f,\V{L_{\Gk}}f}\right)\]
\end{theorem}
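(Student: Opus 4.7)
The plan adapts the log-Sobolev proof of Friedgut's theorem due to Falik and Samorodnitsky \cite{FalikS07} to Cartesian products, using Corollary \ref{cor:main} as the main new tool. Fix a threshold $\tau \in (0, e^{-2})$ to be chosen later, and let $J = \{j \in [k] : \varj{j}(f) \ge \tau\}$. Set $h(x) = \Ex{x_{[k]\setminus J}}{f(x)}$, a function of $x_J$ only, and take $g(x) = \mathrm{sign}(h(x))$; since $f \in \{-1,1\}$, the standard rounding inequality gives $\norm{f - g}_2^2 \le 4 \norm{f - h}_2^2$, reducing the problem to bounding $\norm{f - h}_2^2$. The Efron--Stein tensorization of variance on the product distribution over $x_{[k]\setminus J}$ yields
\[\norm{f - h}_2^2 \;\le\; \sum_{j \notin J} \varj{j}(f).\]

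The core technical step is to bound this tail sum. Apply Corollary \ref{cor:main} to each $f_j$ with $t = 1/e^2$; since $\norm{f_j}_2^2 \le \varj{j}(f) < \tau$ for $j \notin J$, the entropy term satisfies $-\norm{f_j}_2^2 \log \norm{f_j}_2^2 \ge |\log\tau| \cdot \norm{f_j}_2^2$, giving
\[\pair{f_j, \V{L_{\Gk}} f_j} \;\ge\; \frac{\alpha}{2k}\Bigl[-\tfrac{2}{e}\, \varj{j}(f) + (|\log\tau| - 2) \norm{f_j}_2^2\Bigr].\]
Summing over $j \notin J$, using $\sum_j \pair{f_j, \V{L_{\Gk}} f_j} = \pair{f, \V{L_{\Gk}} f}$ (orthogonality of the $f_j$'s in the eigenbasis of $\V{L_{\Gk}}$) and $\sum_j \varj{j}(f) \le k\pair{f,\V{L_{\Gk}}f}/\alpha$, and then averaging the same argument over a uniformly random permutation $\pi$ of the coordinates to symmetrize $\norm{f_j}_2^2$ and make it comparable to $\varj{j}(f)$, one extracts
\[\sum_{j \notin J} \varj{j}(f) \;\le\; \frac{C \cdot k \cdot \pair{f, \V{L_{\Gk}} f}}{\alpha \cdot |\log\tau|}\]
for an absolute constant $C$.

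With this in hand, setting $|\log\tau| = 50 k \pair{f, \V{L_{\Gk}} f}/(\alpha \epsilon)$ makes $\sum_{j \notin J}\varj{j}(f) \le \epsilon/4$, and hence $\norm{f - g}_2^2 \le \epsilon$. To bound $|J|$, the per-cross-section inequality $\pair{f,\V{L_j}f} \ge \alpha \varj{j}(f)$ (from $\alpha \le \lambda_1$ applied on the $G$-copy in coordinate $j$) combined with $\sum_j \pair{f,\V{L_j}f} = k \pair{f, \V{L_{\Gk}} f}$ gives $\alpha \tau |J| \le k \pair{f, \V{L_{\Gk}} f}$, so
\[|J| \;\le\; \frac{k \pair{f, \V{L_{\Gk}} f}}{\alpha \tau} \;\le\; \exp\!\left(\frac{50 k \pair{f, \V{L_{\Gk}} f}}{\alpha \epsilon}\right),\]
after absorbing the polynomial prefactor into the exponent.

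The main obstacle I anticipate is the symmetrization. The corollary naturally controls $\norm{f_j}_2^2$, which captures only Fourier mass whose largest nonzero index (under the fixed ordering) equals $j$, and can therefore be much smaller than $\varj{j}(f)$ -- for instance, a product eigenfunction $v_{(i)}$ with $i_l \ne 0$ for all $l$ contributes to $\norm{f_j}_2^2$ only for $j = k$ but to $\varj{j}(f)$ for every $j$. Averaging the argument over uniformly random orderings $\pi$ of $[k]$ and using that for random $\pi$ each coefficient $\hat{f}_{(i)}^2$ is charged to position $j$ with probability $\mathbb{1}[j \in S((i))]/|S((i))|$, together with the ``total degree'' bound $\sum_{(i)} |S((i))| \hat{f}_{(i)}^2 \le k\pair{f,\V{L_{\Gk}}f}/\alpha$ (from $\lambda_{i} \ge \alpha$ whenever $i \ne 0$), should restore the required comparison and close the argument.
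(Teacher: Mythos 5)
Your function $h = \Ex{x_{[k]\setminus J]}}{f}$ is the same as the paper's $g$ (it is precisely the projection of $f$ onto the span of the eigenfunctions $v_{(i)}$ with $i_j=0$ for all $j\notin J$), and your rounding step, your $|J|$ bound via $\pair{f,\V{L_j}f}\geq\lambda_1\varj{j}(f)\geq\alpha\varj{j}(f)$, and your total-degree bound $\sum_{(i)}|S((i))|\hat f_{(i)}^2\leq k\pair{f,\V{L_{\Gk}}f}/\alpha$ are all correct. The fatal step is where you bound $\norm{f-h}_2^2\leq\sum_{j\notin J}\varj{j}(f)$ and then declare the goal to be
\[
\sum_{j\notin J}\varj{j}(f)\;\le\;\frac{C\,k\,\pair{f,\V{L_{\Gk}}f}}{\alpha\,|\log\tau|}.
\]
This inequality is false for any absolute constant $C$. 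Take $G$ a single edge (the hypercube, $\alpha=2$) and $f=\mathrm{Maj}_k$: every $\varj{j}(f)$ equals $\Theta(1/\sqrt k)$, and at a threshold $\tau$ just above that common value one has $J=\emptyset$, so the left side is $\Theta(\sqrt k)$, while $k\pair{f,\V{L_{\Gk}}f}/\alpha=\Theta(\sqrt k)$ and $|\log\tau|=\Theta(\log k)$; the claimed right side is then $\Theta(\sqrt k/\log k)$. The quantity $\sum_{j\notin J}\varj{j}(f)$ simply does not gain a $1/|\log\tau|$ factor, and this is exactly the double-counting loss created by replacing $\norm{f-h}_2^2$ (an orthogonal sum) with $\sum_{j\notin J}\varj{j}(f)$, which charges each Fourier coefficient $\hat f_{(i)}^2$ once for every $j\in S((i))\setminus J$.

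The symmetrization over random orderings $\pi$ does not rescue this. Averaging gives $\av_\pi\norm{f_j^\pi}_2^2=\sum_{(i):j\in S((i))}\hat f_{(i)}^2/|S((i))|$, which is still smaller than $\varj{j}(f)$ by a factor that can be as large as the typical degree $|S((i))|$, so the discrepancy you are trying to close is exactly the degree factor and it does not cancel. You are fighting the estimate in the wrong direction: the issue is not that Corollary~\ref{cor:main} undercounts the $\varj{j}$, it is that $\sum_{j\notin J}\varj{j}(f)$ overcounts $\norm{f-h}_2^2$.

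The paper's fix is to never pass to $\sum_{j\notin J}\varj{j}(f)$ at all. It orders the coordinates so that $\varj{j}(f)$ is non-increasing; then $J$ is a prefix, and the martingale pieces $f_j$ (defined relative to this ordering) satisfy the \emph{exact} identity $\norm{f-h}_2^2=\sum_{j\notin J}\norm{f_j}_2^2$, because $\{(i): S((i))\subseteq J\}$ coincides with $\{(i):\max S((i))\in J\}$ precisely when $J$ is a prefix. Corollary~\ref{cor:main} then already controls the correct sum $\sum_{j\notin J}\norm{f_j}_2^2$ with the factor $|\log t|$, and no reconciliation between $\norm{f_j}_2^2$ and $\varj{j}(f)$ is needed (the $\varj{j}$ enter only through the error term $\sqrt t\log t\cdot\sum_{j\notin J}\varj{j}(f)$, which is harmless because it carries the extra factor $\sqrt t$). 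With that substitution, your choice of fixed $t=e^{-2}$ would in fact work and is a small simplification over the paper's optimized $t$; the rest of your argument (the $|J|$ bound, the rounding) then goes through. As written, though, the proposal has a genuine gap and the proposed symmetrization does not close it.
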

As before, substituting $\alpha(G) = 2$ gives the hypercube
version. Also note that since $f$ and $g$ are both Boolean functions,
$\pr_{x\sim \pi}[f(x) \neq g(x)] = \frac{1}{4}\cdot \Ex{x \sim
  \pi}{(f(x)-g(x))^2} = \frac{1}{4} \cdot \norm{f-g}_2^2 \le
\frac{1}{4} \cdot \eps.$

One interesting way to interpret the generalization of Friedgut's
Junta theorem is as follows. Suppose that for a family of graphs $G$,
$\Phi$ and $\alpha$ are within a constant factor (as a function of the
size of $G$). Also, we know that the cut with the least conductance in
$\Gk$ has value $\Phi/k$ and depends only on a single coordinate (see
Appendix~\ref{appendix:applications}). The above theorem says that a
balanced cut in $\Gk$ (described by $f$) that has sparsity
$O(\Phi/k)$, must be $\epsilon$-close to a cut that is determined only
by a constant number of coordinates.
\begin{proof}
Let $\alpha \defeq \alpha(G)$ and $\Phi \defeq \Phi(G).$ We order the
coordinates $j$ are so that $\varj{j}(f)$ is non-increasing.  Let $J =
\{1, \ldots, l\}$ be the subset of all coordinates $j$ that have
$\varj{j}(f)$ at least $V$ (where $V$ is some threshold we will pick
later).  Let $f = \sum_{(i)} \widehat{f}_{(i)} v_{(i)}$.  We define
the function $g : V(\Gk) \to \R$ as follows,
\[g(x) = \sum_{(i) : i_j = 0\ \forall j \notin J} \widehat{f}_{(i)}
v_{(i)}.\] We know that if $i_j=0,$ then $v_{(i)}$ does not depend on
the $j^{\textrm{th}}$ coordinate. Thus, $g$ only depends on coordinates in
$J$. We shall show that for an appropriate choice of $V$,
$\norm{f-g}_2^2$ is small i.e. $\norm{f-g}^2 \le \epsilon$. It will
also follow from our choice of $V$ that the number of coordinates in
$J$ is as claimed.

For any $j \notin J,$ using Lemma~\ref{lem:norm-bounds},
$\norm{f_j}_2^2 \le \varj{j}(f) \le V.$ Plugging this into
Corollary~\ref{cor:main}, we get that for all $j \notin J$ and $t \in
(0,\nfrac{1}{e^2}],$
\begin{align*}
\pair{f_j,\V{L_{\Gk}}f_j} & \ge \frac{\alpha}{2k} \left(\varj{j}(f)
\cdot \sqrt{t}\log t + \norm{f_j}_2^2 \cdot \log t - \norm{f_j}_2^2
\cdot \log V\right),
\end{align*}
where we have used $\log t < 0$ for $t \in (0,\nfrac{1}{e^2}].$ Adding
the above equation for all $j \notin J$ and observing $\sum_{j \notin
  J} \pair{f_j,f_j} = \norm{f-g}_2^2,$ we get that for any $t \in
(0,\nfrac{1}{e^2}]$,
\begin{align} \label{eq:friedgut_proof}
\sum_{j \notin J} \pair{f_j,\V{L_{\Gk}}f_j} & ~\ge~
\frac{\alpha}{2k} \cdot \left(\sqrt{t}\log t \sum_{j \notin
    J} \varj{j}(f) + \norm{f-g}_2^2 \cdot \log t - \norm{f-g}_2^2
  \cdot \log V\right).
\end{align}
Using the fact that $\V{L_{\Gk}}$ is positive semi-definite, and
$\sum_{j \in [k]} \pair{f_j,\V{L_{\Gk}}f_j} = \pair{f,\V{L_{\Gk}}f}$
from Lemma~\ref{lem:fj-prop}, we can bound the LHS as
\[\textstyle
\sum_{j \notin J}\pair{f_j,\V{L_{\Gk}}f_j} ~\le~ \sum_{j \in [k]}
\pair{f_j,\V{L_{\Gk}}f_j} = \pair{f,\V{L_{\Gk}}f}.\]
From Lemma~\ref{lem:Lj-conductance}, we know that $\pair{f, \V{L_j} f}
\geq 2\Phi \cdot \varj{j}(f)$. This gives a bound on $\sum_{j \notin
  J} \varj{j}(f)$ in terms of $\pair{f,\V{L_{\Gk}f}}$.
\begin{equation}
\label{eq:friedgut-proof:var-bound}
\textstyle
\sum_{j
  \notin J} \varj{j}(f) ~\le~ \sum_{j \in [k]} \varj{j}(f) ~\le~ 
\frac{1}{2\Phi} \cdot \pair{f, \sum_{j \in [k]} \V{L_j} f}
~=~ \frac{k}{2\Phi} \cdot \pair{f,\V{L_{\Gk}}f}. 
\end{equation}
We now plug in the above bounds into equation
(\ref{eq:friedgut_proof}). Remember that $\log t < 0$. Thus, we get
for any $t \in (0,\nfrac{1}{e^2}]$,
\begin{equation*}
\textstyle
\pair{f,\V{L_{\Gk}}f} ~\ge~ \frac{\alpha}{2k} \cdot
\left(\frac{k}{2\Phi}\sqrt{t}\log t \cdot \pair{f,\V{L_{\Gk}}f} +
\norm{f-g}_2^2 \cdot \log t - \norm{f-g}_2^2 \cdot \log V\right).
\end{equation*}
Again, in order to balance the terms and still have $t \le
\frac{1}{e^2}$, we choose $t = \left(\frac{2\Phi \cdot
    \|f-g\|^2}{ek\cdot \pair{f,\V{L_{\Gk}}f}}\right)^2$ (since
$\norm{f-g}_2^2 \le \var(f) \le \sum_{j \in [k]} \varj{j}(f) \le
\frac{k}{2\Phi} \pair{f,\V{L_{\Gk}}f},$ using
Equations~\eqref{eq:vj-var-comparison} and
\eqref{eq:friedgut-proof:var-bound}).  For this value of $t$, the
above bound simplifies to
\begin{align*}
\pair{f,\V{L_{\Gk}}f} & ~\ge~ \frac{\alpha}{2k} \cdot \|f-g\|^2 \cdot
\left(2\left(1+\frac{1}{e}\right)\log \left(\frac{2\Phi \cdot
\norm{f-g}^2}{ek \cdot \pair{f,\V{L_{\Gk}}f}}\right)- \log V\right) \\
\Longrightarrow \qquad\qquad\log V & ~\ge~
2\left(1+\frac{1}{e}\right)\log \left(\frac{2\Phi \cdot
\norm{f-g}^2}{ek \cdot \pair{f,\V{L_{\Gk}}f}}\right) - \frac{2k \cdot
\pair{f,\V{L_{\Gk}}f}}{\alpha \cdot \norm{f-g}^2} .
\end{align*}
Consider the RHS as a function in $\norm{f-g}^2$, say
$F(\norm{f-g}^2)$ and note that $F$ is an increasing function. Thus,
choosing $\log V = F(\epsilon)$ and hence $V = \exp(F(\epsilon))$
would imply that $\norm{f-g}^2 \leq \epsilon$. It only remains to show
that the size of the set $J$ is bounded as claimed, for this choice of
$V$.

Since $J$ was defined to be the set of coordinates $j$ with
$\varj{j}(f) \geq V$, the size of $J$ is at most $(\sum_j
\varj{j}(f))/V = (\sum_j \varj{j}(f)) \cdot \exp(-F(\epsilon))$.
Using the bounds $\alpha \leq 2\Phi,$ and $\sum_j \varj{j}(f) \leq
(k/2\Phi) \cdot \pair{f, \V{L_{\Gk}} f}$ from
Equation~\eqref{eq:friedgut-proof:var-bound}, we can bound this as
\begin{align*}
|J| & ~\le~ \left(\sum_j \varj{j}(f)\right) \exp\left(
\frac{2k}{\epsilon\alpha} \cdot \pair{f,\V{L_{\Gk}}f} +
2\left(1+\frac{1}{e}\right) \cdot \log \left(\frac{ek\cdot
\pair{f,\V{L_{\Gk}}f}}{2\Phi\epsilon}\right) \right) \\
& ~\leq~ \exp\left( \log\left( \frac{k \cdot \pair{f, \V{L_{\Gk}}
f}}{2\Phi}\right) + \frac{2k}{\epsilon\alpha} \cdot
\pair{f,\V{L_{\Gk}}f} + 2\left(1+\frac{1}{e}\right) \cdot \log
\left(\frac{ek\cdot \pair{f,\V{L_{\Gk}}f}}{2\Phi\epsilon}\right)
\right) \\
& ~\le~ \exp\left( \frac{2k \cdot
\pair{f,\V{L_{\Gk}}f}}{\epsilon\alpha} + \left(3+\frac{2}{e}\right)
\log \left(\frac{2k \cdot \pair{f,\V{L_{\Gk}}f}}{\Phi\epsilon} \right)
\right) \\
& ~\le~ \exp\left(\frac{12 k}{\epsilon\alpha} \cdot
\pair{f,\V{L_{\Gk}}f}\right).
\end{align*}
Thus we have a function $g : V(\Gk) \to \R$ that depends only on a
\emph{few} coordinates and is \emph{close} to $f$. Now, we prove that
the boolean function obtained by taking the sign of $g$, denoted
$\tilde{g} = sgn(g) : V(\Gk) \to \{-1,1\}$ also satisfies $\|f -
\tilde{g}\|^2 = O(\epsilon)$. To see this, observe that whenever $f(x)
\neq \tilde{g}(x)$, $|f(x)-\tilde{g}(x)|^2 = 4$ but $|f(x)-g(x)|^2$
must be at least 1. Hence $\|f-\tilde{g}\|^2 \le 4\epsilon$. Replacing
$\epsilon$ by $\epsilon/4$ proves the theorem.
\end{proof}

\section{Tightness of KKL theorem for Cartesian Products of Graphs}
\label{sec:tightness}

\paragraph{A KKL theorem in terms of conductance.}
Our generalization of the KKL theorem for Cartesian products of graphs
in particular implies one for the $q$-ary hypercube $[q]^k$.  It is
easy to see that the $q$-ary hypercube is exactly the graph
$K_q^{\cart k}$ where $K_q$ denotes the complete graph. We need the
log-Sobolev constant of $K_q$, which is known to be
$\Theta(\frac{1}{\log q})$ for $q \ge 3$ \cite{DiaconisS96}.  Also,
the (normalized) Laplacian for the complete graph on $q$ vertices,
without self-loops, is
\[ \V{I} - \frac{1}{q-1} \cdot (\V{J} - \V{I}) ~=~ \frac{q}{q-1} \cdot
\inparen{\V{I} - \frac{1}{q} \cdot \V{J}}
\]
This gives the following as an easy corollary.
\begin{corollary}
Given a function $f : V(K_q^{\otimes k}) \to \{-1, 1\}$,
\[\textstyle \max_j \pair{f,\V{K_j} f} ~\ge~ \Omega\left(\frac{\log k}{k} \cdot
\frac{\var(f)}{\log q}\right)\]
\end{corollary}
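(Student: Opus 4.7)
\begin{proofof}{Proposal}
The plan is to obtain the corollary as an essentially immediate consequence of Theorem~\ref{thm:kkl} applied to $G = K_q$, after observing two facts: (i) on the complete graph $K_q$, the normalized Laplacian and the centering operator $\V{I} - \frac{1}{q}\V{J}$ agree up to a constant factor in $[1,2]$, so the influence $\pair{f, \V{L_j} f}$ and the directional variance $\pair{f, \V{K_j} f}$ differ only by this bounded factor; and (ii) the log-Sobolev constant of $K_q$ is $\alpha(K_q) = \Theta(1/\log q)$ for $q \ge 3$, which is the estimate of Diaconis and Saloff-Coste \cite{DiaconisS96} cited in the preamble to the corollary.

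First I would verify (i) directly from the definition of $\V{L_G}$ in the Preliminaries section. With the natural uniform edge weights on $K_q$ we have $\pair{f, \V{L_{K_q}} f} = \frac{1}{q(q-1)} \sum_{x \ne y}(f(x)-f(y))^2$, while the variance is $\var(f) = \frac{1}{2q^2}\sum_{x,y}(f(x)-f(y))^2 = \pair{f, (\V{I}-\frac{1}{q}\V{J})f}$. These quadratic forms are therefore proportional with ratio $\frac{2q}{q-1} \in [2,4]$, so tensoring in the $j^{\textrm{th}}$ slot gives $\pair{f, \V{L_j} f} = \Theta(1)\cdot \pair{f, \V{K_j} f}$ on $K_q^{\Box k}$.

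Second, I would plug $G = K_q$ into Theorem~\ref{thm:kkl}. The theorem yields
\[
\max_j \pair{f, \V{L_j} f} ~\ge~ \Omega\!\left(\frac{\log k}{k} \cdot \alpha(K_q) \cdot \var(f)\right)
~=~ \Omega\!\left(\frac{\log k}{k \log q} \cdot \var(f)\right),
\]
and using the $\Theta(1)$ equivalence from step~(i) converts this into the desired bound on $\max_j \pair{f, \V{K_j} f}$. The case $q=2$ can be handled separately (it is just the original KKL theorem on the hypercube, where $\alpha = 1$ and $\log q = 1$ agree up to constants).

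There is no real obstacle to overcome: the corollary is billed as ``an easy corollary,'' and indeed the only non-trivial input is the logarithmic dependence of $\alpha(K_q)$ on $q$, which is an external fact that we quote rather than prove. The mild bookkeeping point is making sure the operator $\V{K_j}$ in the statement of the corollary is interpreted consistently with the directional Laplacian $\V{L_j}$ of the product $K_q^{\Box k}$, which step~(i) takes care of.
\end{proofof}
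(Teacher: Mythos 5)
Your proposal is correct and matches the paper's intended derivation: the paper records this as an ``easy corollary'' of Theorem~\ref{thm:kkl} applied to $G = K_q$ together with the fact $\alpha(K_q) = \Theta(1/\log q)$, and your step~(i) supplies the (implicit but necessary) conversion from $\pair{f,\V{L_j}f}$ to $\pair{f,\V{K_j}f}$, which indeed differ only by the factor $\tfrac{q}{q-1}$ on $K_q$. One small arithmetic slip: from $\pair{f,\V{L_{K_q}}f} = \tfrac{1}{2}\av_{(x,y)\in E}(f(x)-f(y))^2 = \tfrac{1}{2q(q-1)}\sum_{x\ne y}(f(x)-f(y))^2$ the ratio to $\var(f)$ is $\tfrac{q}{q-1}\in(1,2]$ and not $\tfrac{2q}{q-1}$ as displayed, so your initial claim of a factor in $[1,2]$ was the right one; this is harmless since only the $\Theta(1)$ equivalence enters the bound.
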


Assuming this result, there is a simple way to obtain a variant of the
KKL with different bounds. We can apply the conductance bound for
graph $G$ to conclude,
\begin{corollary}
\label{cor:kkl-qary}
Given a function $f : V(\Gk) \to \{-1,1\}$, 
\[\textstyle
\max_j \pair{f,\V{L_j} f} ~\ge~ \Omega\left(\frac{\log k}{k} \cdot
\frac{\Phi(G)}{\log n} \cdot \var(f)\right)\ ,\]
where $n = |V(G)|$.
\end{corollary}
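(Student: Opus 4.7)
The plan is to reduce to the corollary for the complete graph $K_n$ (the $q$-ary hypercube case with $q = n$) and then convert the resulting variance bound into an influence bound using the conductance of $G$. The key observation is that the variance operator $\V{K_j}$ depends only on the vertex set $V(G)^k$ and not at all on the edge structure of $G$, so a Boolean function $f \from V(\Gk) \to \{-1,1\}$ may also be viewed as a Boolean function on $K_n^{\cart k}$ without changing $\pair{f,\V{K_j}f}$ or $\var(f)$.

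First, I would apply Corollary~\ref{cor:kkl-qary} (the previous corollary, which uses that the log-Sobolev constant of $K_n$ is $\Theta(1/\log n)$) to $f$ regarded as a function on $K_n^{\cart k}$. This produces a coordinate $j^\ast$ with
\[
\pair{f,\V{K_{j^\ast}}f} ~\ge~ \Omega\!\left(\frac{\log k}{k\log n}\cdot \var(f)\right).
\]

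Second, I would translate this variance bound into an influence bound via the conductance of $G$. For any fixing of the coordinates $(x_i)_{i\neq j^\ast}$, the slice $f(x_1,\ldots,x_{j^\ast-1},\cdot,x_{j^\ast+1},\ldots,x_k)$ is a $\{-1,1\}$-valued function on $V(G)$, so the conductance inequality gives $\pair{f,\V{L_G}f}|_{\text{slice}} \ge 2\Phi(G)\cdot \var\!\left(f|_{\text{slice}}\right)$. Averaging over the fixed coordinates yields
\[
\pair{f,\V{L_{j^\ast}}f} ~\ge~ 2\Phi(G)\cdot \pair{f,\V{K_{j^\ast}}f},
\]
and combining the two displays gives $\max_j \pair{f,\V{L_j}f} \ge \pair{f,\V{L_{j^\ast}}f} \ge \Omega\!\left(\frac{\log k}{k\log n}\cdot \Phi(G)\cdot \var(f)\right)$, as required.

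The derivation is essentially a two-line reduction, so there is no substantial obstacle. The only subtlety is ensuring that the coordinate chosen by the reduction to $K_n^{\cart k}$ is the same coordinate used when we bound $\max_j \pair{f,\V{L_j}f}$ from below, but this is automatic because $\V{K_j}$ is an intrinsic product-structure operator shared by $\Gk$ and $K_n^{\cart k}$.
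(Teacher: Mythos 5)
Your proposal is correct and follows essentially the same route the paper intends: apply the preceding corollary (Theorem~\ref{thm:kkl} specialized to $K_n$ with $\alpha(K_n)=\Theta(1/\log n)$, stated in terms of $\pair{f,\V{K_j}f}$) and then convert the resulting variance-along-a-coordinate bound into an influence bound via the slice-wise conductance inequality $\pair{f,\V{L_j}f}\ge 2\Phi(G)\pair{f,\V{K_j}f}$. Your explicit observation that $\V{K_j}$ and $\var(f)$ depend only on the vertex set and not on the edge structure of $G$ is exactly the reason the reduction to $K_n^{\cart k}$ is legitimate, and the step where you fix $j^\ast$ maximizing $\pair{f,\V{K_j}f}$ and lower-bound $\max_j\pair{f,\V{L_j}f}$ by $\pair{f,\V{L_{j^\ast}}f}$ is the standard (and correct) way to finish.
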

In general, the above corollary is incomparable to Theorem
\ref{thm:kkl}, which gives a bound of $\Omega(\frac{\log k}{k} \cdot
\alpha \cdot \var (f))$.  We get a quantitatively better bound from
Theorem \ref{thm:kkl} if $\alpha \gg \frac{\Phi}{\log n}$.

An example where this is true is the following: Consider the
$R$-dimensional hypercube $H_R$ and consider $H_R^{\cart k}$. Though
this graph is isomorphic to the $kR$-dimensional hypercube, our notion
of \emph{influence} now translates to the number of edges cut along
one of the hypercubes $H_R$ i.e. the number of edges along one of the
$k$ blocks of $R$ coordinates each.

Assume we have a boolean function $f$ with variance $\Omega(1)$ on
$H_R^{\cart k}$.  Applying Theorem \ref{thm:kkl} to this instance, we
conclude that there must be a block of $R$ coordinates, along which
the fraction of edges cut is $\Omega\left(\frac{\log
    k}{kR}\right)$. Whereas, the bound that Corollary
\ref{cor:kkl-qary} gives is $\Omega\left(\frac{\log k}{k R^2}\right)$.

\paragraph{An example with maximum influence $o\inparen{\Phi \cdot
    \frac{\log k}{k}}.$}
Both Theorem \ref{thm:kkl} and Corollary \ref{cor:kkl-qary} generalize
the KKL theorem for the $k$-dimensional hypercube, and the ``tribes''
function which is known to be tight for the KKL theorem on the
hypercube also shows the tightness of the above theorems.
For the $k$-dimensional hypercube, the bound given by both theorems is
in fact $\Omega\inparen{\frac{\log k}{k} \cdot \Phi}$ since $\Phi =
\Omega(1)$ for the underlying graph (an
edge). 

However, both the above results give bounds which, in general, can be
much smaller than $\frac{\log k}{k} \cdot \Phi$, when applied to
$G^{\cart k}$ for an arbitrary graph $G$, for a function $f$ with
$\var(f) = \Omega(1)$. Below, we construct a family of examples to show
that this is necessary, and the theorems cannot be improved to give a
bound of $\Omega\inparen{\frac{\log k}{k} \cdot \Phi}$.

%

Consider the $R$-dimensional hypercube in which we identify any two
vertices which are the same after a cyclic permutation of the
coordinates. Formally, the vertex set is $\inparen{\B^R \setminus
  \inbraces{0^R, 1^R}} / {\cal C}$, where $\cal C$ is the group of
cyclic permutations on $\{1,\ldots, R\}$. Each vertex of the new graph
is then an equivalence class. There is an edge between two classes
$C_1$ and $C_2$ if there exist $u \in C_1$ and $v \in C_2$ such that
$(u,v)$ is an edge in the hypercube. We take this to be our graph $G$.

It follows from the KKL theorem for the hypercube that for the above
graph, $\Phi = \Omega\left(\frac{\log R}{R}\right)$. It was shown by
Devanur et al.  \cite{DevanurKSV06} (also see \cite{KollaL11} for a
more general proof) that for the above graph, the value of the
sparsest cut SDP (with triangle inequalities) is
$O\left(\nfrac{1}{R}\right)$. Since the SDP value is known to be an
upper bound on $\lambda$, we have $\lambda =
O\left(\nfrac{1}{R}\right)$ and hence $\alpha =
O\left(\nfrac{1}{R}\right)$ (in fact $\alpha =
\Theta\left(\nfrac{1}{R}\right)$ since collapsing vertices into
equivalence classes can only increase $\alpha$, which was
$\nfrac{2}{R}$ for the $R$-dimensional hypercube).

The bound given by Theorem \ref{thm:kkl} for maximum influence of a
coordinate in $\Gk$, when $\var (f) = \Omega(1)$, is then
$\Omega\inparen{\frac{1}{R} \cdot \frac{\log k}{k}} =
o\inparen{\Phi\cdot \frac{\log k}{k}}$ (if $R = \omega(1)$).  We now
show that this is tight.
\begin{claim}
Given $k \in \N$ and the graph $G$ as above with $R = k^{O(1)}$, there
exists a function $f: V(\Gk) \to \B$ such that $\var(f) = \Omega(1)$
and the influence of every coordinate is $O\inparen{\frac{1}{R} \cdot
\frac{\log k}{k}}$.
\end{claim}
\begin{proof}
A vertex $x \in V(\Gk)$ is of the form $(C_1, \ldots, C_k)$ where
$C_1, \ldots, C_k$ are equivalence classes in $\B^R$ as described
above. We take $t = \log_2(kR)$ and define $f$ as
\[ f(C_1, \ldots, C_k) = 1 ~~~\text{iff}~~~ \exists i \in [k]
~\text{and}~ u \in C_i ~\text{such that}~ u~\text{has $1^{t-1}0$ as a
substring} . \]
We first show that $\var(f) = \Omega(1)$ by bounding $\Pr[f=0]$. The
function $f$ is 0 on a vertex $(C_1,\ldots,C_k)$, when for each $i \in
[k]$, no $u \in C_i$ contains $1^{t-1}0$ as a substring. Since
different blocks are independent, we can simply estimate the
probability that within a single block, \emph{at least one} of the $u
\in C$ contains $1^{t-1}0$ (for $C$ chosen at random from
$V(G)$). This can be expressed as union of $R$ events, corresponding
to the position in $u$ where the 0 from $1^{t-1}0$ appears. Since no
two occurrences of the substring can overlap, two such events happen
with probability at most $1/2^{2t}$. Denoting by $p_1$ the above
probability for a single block, we get by inclusion-exclusion that
\[ \textstyle R \cdot \frac{1}{2^t} ~\geq~ p_1 ~\geq~ R \cdot \frac{1}{2^t} - R^2
\cdot \frac{1}{2^{2t}} .\]
Using the value of $t$, this gives $\frac{1}{k} \geq p_1 \geq
\inparen{1-\frac{1}{k}} \cdot \frac{1}{k}$. Since $\Pr[f=0]$ is simply
$(1-p_1)^k$, we get that
\[ \textstyle
\inparen{1-\frac{1}{k}}^k ~\leq~ \Pr[f=0] ~\leq~
\inparen{1-\frac{1}{k} + \frac{1}{k^2}}^k .\]
The above gives that $\var(f) = \Omega(1)$. Also, the definition of
$f$ is symmetric in all coordinates $i \in [k]$ and hence all
influences are equal. It remains to compute the influence of a
coordinate.

We estimate the probability that a random edge along the first
coordinate (say) is crossed by the cut that $f$ gives. Let $f(C_1,
C_2, \ldots,C_k) = 1$ and $f(C_1', C_2, \ldots,C_k) = 0$ where
$(C_1,C_1')$ is an edge in $G$. Then $C_1$ must have exactly one
substring of the form $1^{t-1} 0$, which happens with probability
$O(1/k)$ since the bounds for $p_1$ above also hold when $p_1$ is
taken to be the probability of \emph{exactly one occurrence} of the
substring.

Also, $C_1'$ must differ from $C_1$ in one of these $\log(kR)$
positions. For a fixed $C_1',$ this happens with probability
$\log(kR)/R$. Hence the fraction of edges crossed by the cut is
$O(\log(kR)/kR)$. Choosing $R = k^{O(1)}$ shows that this is $O((1/R)
\cdot (\log k/k)) = O(\alpha \cdot (\log k/k))$.
\end{proof}





\section{Generalization to Reversible Markov Chains and Irregular
  Graphs}
\label{sec:irregular}
\paragraph{Preliminaries for reversible Markov chains.}
We first recall a few definitions for reversible Markov chains.  A
Markov chain $G$ on a finite state space $V(G)$ is defined by its
Kernel $\V{K}$, which is an $|V(G)| \times |V(G)|$ matrix satisfying,
\[\textstyle
\V{K}_{x,y} \ge 0, \ \sum_{y \in V(G)} \V{K}_{x,y} =1.\]
$\V{K}$ operates on the space of functions $f : V(G) \to \R$ as
$(\V{K}f)(x) = \sum_{y \in V(G)} \V{K}_{x,y} f(y).$ Let $\pi$ denote an
invariant measure for $\V{K},$ \emph{i.e.}, it satisfies,
\[\textstyle \sum_{x \in V(G)} \pi(x)\V{K}_{x,y} = \pi(y).\] 
Such a measure always exists, and is unique under a mild
irreducibility condition. We say that $(\V{K},\pi)$ is reversible if
it satisfies the \emph{detailed balance} condition for all $x,y \in
V(G),$ \emph{i.e.},
\[\forall\ x,y \in V(G),\ \pi(x)\V{K}_{x,y} = \pi(y)\V{K}_{y,x} .\]

\paragraph{General graphs to reversible Markov chains.}
Suppose we instead started with a general undirected graph $G$ with
its combinatorial adjacency matrix $\V{A},$ \emph{i.e.}, the weight of
the edge $(x,y) \in V(G) \times V(G)$ is given by $\V{A}_{x,y}.$ The
only constraint on $\V{A}$ is that it is a symmetric matrix with
non-negative entries. We define the degree of a vertex $x \in V(G)$ as
$d_x \defeq \sum_{y \in V(G)} \V{A}_{x,y}.$ Let $\V{D}$ be the
diagonal matrix with diagonal entries $\{d_x\}_{x\in V}.$ The random
walk on $G$ is a Markov chain with kernel $\V{K} \defeq
\V{D}^{-1}\V{A},$ with a reversible and stationary distribution,
$\pi(x) \defeq \frac{d_x}{\sum_{y} d_y}.$ For general graphs, we
define all the notions using this reversible Markov chain associated
with the graph. From now on, we'll work only with reversible Markov
chains.

The main theorems of this paper, Theorems~\ref{thm:kkl} and
\ref{thm:friedgut}, also hold for reversible Markov chains, or
equivalently, for irregular, weighted graphs (self-loops are also
permitted). We now give the required definitions in these cases.

Using these definitions, all the previous proofs go through without
modifications.

\subsection{Required definitions in case of reversible Markov chains.}
\begin{definition}[Cartesian product]
Given $k$ Markov chains $G_1,\ldots,G_k$, with state spaces $V(G_1)$,
$\ldots,V(G_k)$ and transition kernels $\V{K_1},\ldots,\V{K_k}$
respectively, their Cartesian product $G_1\cart \ldots \cart G_k$ is a
Markov chain on the state space $V(G_1) \times \ldots \times V(G_k),$
and its transition kernel is specified as follows:

Starting at state $(x_1,\ldots,x_k),$ pick $i \in [k]$ uniformly at
random. Pick a transition $x_i \rightsquigarrow y_i$ according to
$\V{K_i},$ and let the next state be
$(x_1,\ldots,x_{i-1},y_i,x_{i+1},\ldots,x_k).$
Equivalently, the transition kernel is $\frac{1}{k} \cdot \sum_{i=1}^k
\V{I} \otimes \ldots \otimes \V{K_i} \otimes \ldots \otimes \V{I},$
where $\V{K_i}$ is in the $i^\textrm{th}$ position.

Define $\Gk$ to be $G \cart \ldots \cart G$ ($k$ times).
\end{definition}
For the rest of the section, let $G$ be a reversible Markov chain with
state space $V(G),$ transition kernel $\V{K},$ and stationary
distribution $\pi(\cdot).$ We will assume that $G$ is irreducible, and
hence the stationary distribution $\pi$ is unique, as otherwise the
log-Sobolev constant $\alpha$ is 0 and our results become trivial.

The following simple observation tells us that the stationary
distribution on $\Gk$ is the product distribution.
\begin{claim}
Let $\pi(\cdot)$ be the unique stationary distribution for an
irreducible, reversible Markov chain $G$ defined on the state space $V(G).$
The unique stationary distribution for $\Gk$ is given by $\pi^{\otimes
  k},$ defined as $\pi^{\otimes k}(x_1,\ldots,x_k) \defeq \prod_{i=1}^k \pi(x_i).$ 
\end{claim}





The definition of inner product and norms for the space of functions
$V(G) \to \R$ remain the same except they use the stationary
distribution $\pi(\cdot)$ instead of the uniform distribution.
\[\pair{f,g} \defeq \av_{x\sim \pi} [f(x)g(x)] \ , \quad \norm{f}_2^2 \defeq
\pair{f,f} = \av_{x \sim \pi} [f(x)^2], \quad \text{and} \quad
\norm{f}_1 \defeq \Ex{x \sim \pi}{\abs{f(x)}} .\] 
Expectation and variance are also defined according to $\pi,$
\[\textstyle\var(f) \defeq \av_{x \sim \pi}f(x)^2 - \left(\av_{x\sim
    \pi}f(x)\right)^2.\]

For the space of functions $V(\Gk) \to \R,$ all the above notions are
defined using $\pi^{\otimes k},$ which is the stationary distribution
over $V(\Gk),$ however, as before we will use the same notation and
the corresponding space of functions will be clear from the
context. Importantly, under these definitions, we still have,
\begin{align*}
\textstyle\pair{f_1 \otimes \ldots \otimes f_k,g_1\otimes \ldots \otimes g_k}
= \prod_{i=1}^k \pair{f_i,g_i}.
\end{align*}
The normalized Laplacian $\V{L_G}$ is now defined as, $\V{L_G} \defeq
\V{I} - \V{K}.$ It is easy to verify that $\V{L_G}$ satisfies,
\[
\pair{f,\V{L_G}f} = \frac{1}{2}\sum_{x,y \in V(G)}
(f(x)-f(y))^2 \cdot \pi(x)\V{K}_{x,y} .\]
As before, we define the operator $\V{L_j}$ to be
\[\V{L_j} \defeq \V{I} \otimes \ldots \otimes \V{I} \otimes \V{L_G}
\otimes \V{I}\otimes \ldots \otimes \V{I},\] where the matrix
$\V{L_G}$ is in the $j^\textrm{th}$ position. Thus,
\begin{align*}
\av_j \V{L_j} & = \av_j ( \underbrace{\V{I} \otimes \ldots
\otimes \V{I}}_{j-1 \textrm{ copies }} \otimes \V{L_G} \otimes
\V{I}\otimes \ldots \otimes \V{I} ) = \av_j (
\underbrace{\V{I} \otimes \ldots \otimes \V{I}}_{j-1 \textrm{ copies
}} \otimes (\V{I} - \V{K}) \otimes \V{I} \otimes \ldots \otimes \V{I}
) \\ & = \V{I}\otimes \ldots \otimes \V{I} - \av_j (
\underbrace{\V{I} \otimes \ldots \otimes \V{I}}_{j-1 \textrm{ copies
}} \otimes \V{K} \otimes \V{I} \otimes \ldots \otimes \V{I})
\quad \stackrel{\textrm{by def.}}{=} \V{L_{\Gk}}\ .
\end{align*}
The set of ``edges'' along coordinate $j$ is $E_j(\Gk) \defeq \{\
(x,y)\ |\ \forall i \neq j,\ x_i = y_i\ \}.$ Thus, \[\pair{f,\V{L_j}f}
= \frac{1}{2}\sum_{(x,y) \in E_j(\Gk)} (f(x)-f(y))^2 \pi(x)
\V{K}_{x_j,y_j},\] where we note that for all $(x,y) \in E_j(\Gk),$
$\pi(x)\V{K}_{x_j,y_j} = \pi(y)\V{K}_{y_j,x_j}.$

\emph{Influence} of $f$ along coordinate $j$ is defined as
$\pair{f,\V{L_j}f},$ as before. \emph{Variance along $j^\textrm{th}$
  coordinate} is defined as before, except that each coordinate is
distributed independently according to $\pi.$
\[\textstyle \varj{j}(f) \defeq \Ex{x \setminus
  \{x_j\} \sim \pi^{\otimes (k-1)}}{\Ex{x_j \sim \pi}{f(x)^2} -
  \left(\Ex{x_j \sim \pi}{f(x)} \right)^2}.\]

Letting $\mathbbm{1}$ denote the all 1's vector, and $\Pi$ denote the
vector with the entries $\{\pi(x)\}_{x \in V(G)},$ define the operator
$\V{K_j}$ as
\[\V{K_j} \defeq \V{I}\otimes \V{I}\otimes\ldots \otimes
\left(\V{I}-\mathbbm{1}\Pi^{\top}\right) \otimes \ldots \otimes\V{I},\]
where the matrix $\V{I}-\mathbbm{1}\Pi^{\top}$ is in the
$j^{\textrm{th}}$ position. As before, for any $f : V(\Gk) \to \R$ and
any $j \in \{1,\ldots,k\},$ we have $\varj{j}(f) = \pair{f, \V{K_j}
  f}$.

\paragraph{Conductance.}
The volume of a set $\vol(S)$ is now defined as the measure of $S$
under $\pi$, $\vol(S) \defeq \sum_{v \in S} \pi(v).$ Conductance is now
defined as 
\[\Phi(G) \defeq \min_{\small \substack{S \subset V(G) \\ S \neq
\emptyset, V(G)}} \frac{\sum_{x\in S,y\in \bar{S}}
\pi(x)\V{K}_{x,y}}{2\vol(S)\vol(\bar{S})} = \min_{\small \substack{f
\from V(G) \to \{-1,1\} \\ \var(f) \neq 0}} \frac{\pair{f,\V{L_G}
f}}{2\var(f)}\]

\paragraph{Eigenfunctions and Eigenvalues.}
Since $G$ is a reversible Markov chain with the stationary
distribution $\pi(\cdot),$ for every $x,y \in V(G),$ we have
$\V{K}_{x,y} \pi(x) = \V{K}_{y,x} \pi(y).$ This implies that for
$f,g:V(G) \to \R,$ under the dot product defined according to $\pi,$
$\pair{\V{K}f, g} = \pair{f,\V{K}g}.$ Thus $\V{K}$ is a self-adjoint
operator, and so is $\V{L_G} = \V{I}-\V{K}.$ Thus, as before, it has
real eigenvalues $\lambda_0 = 0 \le \lambda_1 \le \ldots \le
\lambda_{n-1},$ and an orthonormal basis of eigenfunctions $v_0 =
\mathbbm{1},v_1,\ldots,v_{n-1}.$ All the properties of the
eigenfunctions of $\Gk$ described in Proposition~\ref{prop:Gk-eigen}
follow.

\paragraph{Log-Sobolev Constant.} The entropy of a function is now
defined with the expectations taken under $\pi.$
\begin{align*}
\Ent(f^2) &~\defeq~ \av_{x \sim \pi} [f(x)^2\log f(x)^2] - (\av_{x
\sim \pi} [f(x)^2])\log \av_{x \sim \pi} [f(x)^2]
\end{align*}
As before, the log-Sobolev constant of $G$ is the largest constant
$\alpha$ such that for all functions $f : V(G) \to \R$,
\[\pair{f,\V{L_G}f} ~\ge~ \frac{\alpha}{2} \cdot \Ent(f^2).\] Again, with
these definitions, if the log-Sobolev constant for $G$ is $\alpha,$
then the log-Sobolev constant for $\Gk$ is $\alpha/k$~\cite{DiaconisS96}.
Moreover, the isoperimetric constants defined above also satisfy the
inequalities $\alpha ~\le~ \lambda_1 ~\le~ 2\Phi.$


\section*{Acknowledgements}
We thank Ryan O'Donnell and Rishi Saket for helpful
comments. We are also grateful to Elchanan Mossel and Ryan
O'Donnell for several relevant pointers to the literature.

\bibliographystyle{plain}
\bibliography{kkl}
\appendix

\section{Applications to integrality gaps for Sparsest Cut}
\label{appendix:applications}
In this section, we show that the Cartesian product is the right
method of \emph{padding} Sparsest Cut integrality gap instances. We
recall that the Sparsest Cut value of a graph $G$ is determined by the
conductance of $G$ up to a factor of $2$. In this section, we will
work with conductance.

Firstly, we show that, given a graph $G$ that has a conductance value
of $\Phi$, $\Gk$ has a conductance value of $\frac{1}{k}\Phi$. We also
show a similar statement for the optimum value for several families of
SDP relaxations of Sparsest Cut. Informally, we show that, given a
graph $G,$ with an SDP value of $\opt$, $\Gk$ has an SDP value of
$\frac{1}{k}\opt$, where the SDP is any one of several common classes
of SDP relaxations for conductance. In particular, the ratio of the
two values is preserved.

We first prove the theorem about conductance. This theorem is similar
to (special cases of) the ones proved in \cite{HoudreT96} and
\cite{ChungT98}. We include a proof for completeness. The following
proof approach is also included in (the journal version of)
\cite{HoudreT96}.
\begin{theorem}[Conductance Value]
  Given a graph $G$, we can relate the Conductance of $\Gk$ and $G$ as
  follows,
  \[\Phi(\Gk) ~=~ \frac{1}{k}\Phi(G)\ .\]
\end{theorem}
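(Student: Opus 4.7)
The plan is to prove the two inequalities $\Phi(\Gk) \le \Phi(G)/k$ and $\Phi(\Gk) \ge \Phi(G)/k$ separately, using the variational characterization of conductance in terms of $\{-1,1\}$-valued functions already recorded in the preliminaries.

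For the upper bound, I would take an optimal cut $S^{\star} \subseteq V(G)$ witnessing $\Phi(G)$ and \emph{lift} it to $\Gk$ by setting $T = S^{\star} \times V(G)^{k-1}$. By symmetry of the product, the cut edges of $T$ all lie in the direction $1$ slab $E_1$, and in that direction the edge cut is just $|E_G(S^{\star},\bar{S^{\star}})| \cdot |V(G)|^{k-1}$. Since $|E_j| = |E(G)|\cdot |V(G)|^{k-1}$ for every $j$ and $|E(\Gk)| = k \cdot |E(G)|\cdot |V(G)|^{k-1}$, while $\vol_{\Gk}(T) = \vol_G(S^{\star})$, plugging into the set-based definition of $\Phi$ yields exactly $\Phi(\Gk) \le \Phi(G)/k$. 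This is a direct substitution; no obstacle here.

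For the lower bound, I would use the identity $\V{L_{\Gk}} = \tfrac{1}{k}\sum_j \V{L_j}$ together with the key observation that, for a $\{-1,1\}$-valued $f$ on $\Gk$, fixing all coordinates except $j$ leaves a $\{-1,1\}$-valued function on $G$. Applying the variational characterization of $\Phi(G)$ to that restricted function and averaging over the fixed coordinates gives
\[
\pair{f, \V{L_j} f} ~\ge~ 2\,\Phi(G)\cdot \varj{j}(f) \qquad \text{for every } j.
\]
Summing over $j$ and dividing by $k$,
\[
\pair{f, \V{L_{\Gk}} f} ~\ge~ \frac{2\Phi(G)}{k} \cdot \sum_{j=1}^{k} \varj{j}(f).
\]
At this point I invoke subadditivity of variance (Efron--Stein) for product distributions, $\sum_j \varj{j}(f) \ge \var(f)$, which in our Fourier language is immediate from the orthogonal decomposition $\var(f) = \sum_j \|f_j\|_2^2$ together with the bound $\|f_j\|_2^2 \le \varj{j}(f)$ already proved as a claim in Section 3. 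Combining,
\[
\frac{\pair{f, \V{L_{\Gk}} f}}{2\,\var(f)} ~\ge~ \frac{\Phi(G)}{k},
\]
and taking the minimum over $\{-1,1\}$-valued $f$ gives $\Phi(\Gk) \ge \Phi(G)/k$.

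I do not anticipate a serious obstacle: both directions are short once one commits to the variational (rather than set) form of conductance on the lower bound side. The only technical care needed is to ensure the decomposition $\V{L_{\Gk}} = \tfrac{1}{k}\sum_j \V{L_j}$ is used with the normalization conventions fixed in the preliminaries (uniform $\pi$, edge weights summing to $1$), so that the constants in $\pair{f,\V{L_j}f} \ge 2\Phi(G)\varj{j}(f)$ match exactly; this is just careful bookkeeping rather than a real difficulty.
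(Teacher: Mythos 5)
Your proposal is correct and follows essentially the same route as the paper: the lower bound fixes all coordinates but one, applies the variational form of $\Phi(G)$ to each fiber to get $\pair{f,\V{L_j}f}\ge 2\Phi(G)\varj{j}(f)$, averages over $j$ via $\V{L_{\Gk}}=\av_j\V{L_j}$, and finishes with Efron--Stein; the upper bound is the same cylinder lift (your set $T=S^\star\times V(G)^{k-1}$ is exactly the level set of the paper's lifted function $g(v_1,\dots,v_k)=f(v_1)$). The only cosmetic difference is that you justify Efron--Stein via the Fourier decomposition and the norm bound $\norm{f_j}_2^2\le\varj{j}(f)$, whereas the paper proves it directly by Jensen; both are fine.
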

\begin{proof}
Let us first prove the simple direction $\Phi(\Gk) \le
\frac{1}{k}\Phi(G)$. In order to prove this, fix a function $f \from
V(G) \to \{-1,1\}$ that achieves
$\frac{\pair{f,\V{L_G}f}}{2\var(f)} = \Phi(G).$

Now, define $g \from V(\Gk) \to \R$ as $g(v_1,\ldots,v_k) =
f(v_1)$. It is easy to see that $\var(g) = \var(f)$. Let us
compute $\pair{g,\V{L_{\Gk}}g}$.
\begin{align*}
\pair{g,\V{L_{\Gk}}g} & = \frac{1}{2} \av_{(x,y) \in E(\Gk)} (g(x)-g(y))^2 \\
& = \frac{1}{2} \av_{(x,y) \in E(\Gk)} (f(x_1)-f(y_1))^2 \\
& = \frac{1}{2k} \av_{x_2,\ldots,x_k} \av_{(x_1,y_1) \in E(G)} (f(x_1)-f(y_1))^2  
\shortintertext{(for all other edges the contribution is 0)}
& = \frac{1}{2k} \av_{x_2,\ldots,x_k} 2\pair{f,\V{L_G}f} \\
& = \frac{2\Phi(G)}{k} \var(f) = \frac{2\Phi(G)}{k} \var(g)
\end{align*}
Thus, $\Phi(\Gk) \le \frac{1}{k}\Phi(G)$.

Now, let us prove that $\Phi(\Gk) \ge \frac{1}{k} \Phi(G)$. Fix a set
$S \subseteq V(\Gk)$. Let $f$ denote the $\{-1,1\}$-valued indicator
function for $S$. From Lemma~\ref{lem:Lj-conductance}, we know that
$\pair{f,\V{L_j}f} \ge 2 \Phi(G) \cdot \varj{j}(f).$ Averaging over
all $j \in [k],$
\[\textstyle
 \pair{f,\V{L_{\Gk}}f} = \av_{j}{\pair{f,\V{L_j}f}} \ge 2
\frac{\Phi(G)}{k} \cdot \sum_{j} \varj{j}(f) \ge 2
\frac{\Phi(G)}{k} \cdot \var(f),\] 
where the last inequality follows from
Lemma~\ref{lem:vj-var-comparison}, which gives $\sum_{j \in [k]}
\varj{j}(f) \ge \var(f)$. Hence $\Phi(\Gk) \ge
\frac{1}{k}\Phi(G)$.
\end{proof}

Now, we will show that if we consider an SDP relaxation for Sparsest
Cut, the SDP value for $\Gk$ will be $\frac{1}{k}\opt$, where $\opt$
is the SDP value for $G$. This theorem holds for the following
families of SDP relaxations - standard SDP, SDP with triangle
inequalities, SDP with $k$-gonal inequalities, standard SDP with $t$
levels of Sherali-Adams constraints (for any fixed $t$) and SDP at
$t^\textrm{th}$ levels of Lasserre hierarchy.
\begin{theorem}[SDP Value for Sparsest Cut]
Let $\Psi$ be an SDP relaxation for Sparsest Cut that is one of the
following: the standard relaxation with $t$ levels of SA variables,
the standard relaxation lifted to $t$ level of Lasserre hierarchy ($t$
is arbitrary). Then, denoting the relaxation $\Psi$ applied to $G$ as
$\Psi(G)$ and the optimum for $\Psi(G)$ by $\opt(G)$,
\[\opt(\Gk) \le \frac{1}{k}\opt(G)\]
\end{theorem}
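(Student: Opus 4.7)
The plan is to mimic the easy direction of the conductance bound: starting from an optimal solution of $\Psi(G)$, lift it to a solution of $\Psi(\Gk)$ that only ``sees'' the first coordinate, and then check that the lifted objective is exactly $\opt(G)/k$. Let $\pi_1 : V(\Gk) \to V(G)$ be projection onto the first coordinate. In both families $\Psi$ the variables of interest at level $t$ are indexed by a set $S$ of at most $t$ vertices together with an assignment $\alpha : S \to \{0,1\}$; for Sherali--Adams these are scalars $y_S^\alpha$, and for Lasserre these are vectors $v_S^\alpha$ whose Gram matrix is PSD and whose inner products satisfy the usual moment identities.

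Call an assignment $\beta : T \to \{0,1\}$ on a set $T \subseteq V(\Gk)$ \emph{consistent} if $\beta(x)=\beta(y)$ whenever $\pi_1(x)=\pi_1(y)$; a consistent $\beta$ induces a unique $\alpha : \pi_1(T) \to \{0,1\}$. Given an optimum $\{v_S^\alpha\}$ for $\Psi(G)$, I would define the lifted solution by
\[
\tilde v_T^\beta \;=\; \begin{cases} v_{\pi_1(T)}^{\alpha} & \text{if $\beta$ is consistent, with induced $\alpha$,} \\ 0 & \text{otherwise.}\end{cases}
\]
This is simply the ``depend only on coordinate $1$'' lift. The first step is to check that this is feasible. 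For marginalization (in both SA and Lasserre), when we sum $\tilde v_{T'}^{\beta'}$ over $\beta'$ extending a consistent $\beta$ on $T \subseteq T'$, the inconsistent $\beta'$ contribute $0$ and the consistent ones biject with extensions $\alpha'$ of the induced $\alpha$ on $\pi_1(T') \supseteq \pi_1(T)$, which reduces the identity to its counterpart for $G$. Non-negativity of scalars and PSD-ness of the Lasserre Gram matrix are inherited: the lifted Gram matrix is obtained from the original by duplicating rows/columns (for each vertex collision on the first coordinate) and zeroing out the rows for inconsistent $\beta$, both of which preserve PSD-ness.

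The genuinely subtle step, which I expect to be the main obstacle, is verifying the Lasserre inner-product identities $\langle \tilde v_{T_1}^{\beta_1}, \tilde v_{T_2}^{\beta_2}\rangle = \langle \tilde v_{T_1 \cup T_2}^{\beta_1 \cup \beta_2}, \tilde v_\emptyset^\emptyset\rangle$. The awkward case is when $\beta_1, \beta_2$ are each individually consistent and agree on $T_1 \cap T_2$, yet $\beta_1 \cup \beta_2$ is inconsistent on $T_1 \cup T_2$ because some $x \in T_1 \setminus T_2$ and $y \in T_2 \setminus T_1$ satisfy $\pi_1(x)=\pi_1(y)$ with $\beta_1(x) \neq \beta_2(y)$. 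The right-hand side is then $0$ by construction, and one checks the left-hand side is also $0$: the induced assignments $\alpha_1, \alpha_2$ disagree at the single point $\pi_1(x)=\pi_1(y)$ of $\pi_1(T_1) \cap \pi_1(T_2)$, so the Lasserre identity for $G$ forces $\langle v_{\pi_1(T_1)}^{\alpha_1}, v_{\pi_1(T_2)}^{\alpha_2}\rangle = 0$. All other configurations reduce cleanly to the Lasserre identity on $G$.

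Finally I would compare objectives. The Sparsest Cut objective is a ratio whose numerator is $\av_{(x,y) \in E(\Gk)} \|\tilde v_{\{x\}}^1 - \tilde v_{\{y\}}^1\|^2$ and whose denominator is $\av_{x,y}\|\tilde v_{\{x\}}^1 - \tilde v_{\{y\}}^1\|^2$ (for SA, replace $\|\cdot\|^2$ by the corresponding local-marginal probability that $x$ and $y$ get different labels). Any edge along coordinate $j \geq 2$ has $\pi_1(x) = \pi_1(y)$ and hence contributes $0$, while the $\tfrac{1}{k}$ fraction of edges along coordinate $1$ reproduces exactly the numerator for $G$ scaled by $\tfrac{1}{k}$. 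The denominator, averaged over all pairs $(x,y) \in V(\Gk)^2$, depends only on the distribution of $(\pi_1(x),\pi_1(y))$, which is uniform on $V(G)^2$, so it equals the denominator for $G$. Hence the lifted objective is $\tfrac{1}{k}\opt(G)$, and since the triangle and $k$-gonal inequalities referenced above the theorem depend only on the variables $\tilde v_{\{x\}}^1$ and hold because they hold for $\{v_{\{u\}}^1\}$ on $G$, the construction is feasible for each relaxation in the list.
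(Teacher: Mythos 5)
Your construction is correct, and it reaches the same bound $\opt(\Gk) \le \tfrac{1}{k}\opt(G)$, but by a genuinely different lift than the paper's. You project the $G$-solution onto the first coordinate only (setting $\tilde v_T^\beta = v_{\pi_1(T)}^\alpha$ for consistent $\beta$, zero otherwise), directly mimicking the $g(x_1,\dots,x_k)=f(x_1)$ lift from the easy direction of the conductance theorem. The paper instead symmetrizes over all $k$ coordinates: vertex vectors become (normalized) direct sums $v_x = \bigoplus_j v_{x_j}$, the Sherali--Adams local distribution on $T$ is ``pick a uniformly random coordinate $j$, sample from the $G$-distribution on the $j$-th projections,'' and the Lasserre vectors are $v_T = \bigoplus_j v_{T_j}$ where $T_j$ is a parity projection (the paper works in the $\{\pm 1\}$ symmetric-difference formulation of Lasserre while you use the $\{0,1\}$-assignment formulation, which is an equivalent convention). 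The two lifts arrive at the same value by a different accounting: yours gets a zero contribution from every edge along a coordinate $j\geq 2$ and the full $G$-numerator from the $1/k$ fraction of coordinate-$1$ edges, whereas the paper's gets a $1/k$-scaled contribution from every edge. Your version is arguably cleaner --- it avoids carrying the $1/\sqrt k$ normalization of the direct sum and makes the spreadness check trivial --- and your attention to the ``individually consistent but jointly inconsistent'' Lasserre case is the correct subtlety to isolate: the disagreement of $\alpha_1,\alpha_2$ at $\pi_1(x)=\pi_1(y)\in\pi_1(T_1)\cap\pi_1(T_2)$ does force the $G$-side inner product to vanish, matching the zero on the right-hand side. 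The paper's symmetric construction has the aesthetic advantage of respecting the coordinate symmetry of $\Gk$, but both are valid feasible points with objective $\tfrac{1}{k}\opt(G)$.
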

\begin{proof}
Let $\{v_u\}_{u \in V(G)}$ be an optimum solution to the SDP
$\Psi(G)$. We will construct a solution for $\Psi(\Gk)$ with objective
value $\frac{1}{k}\opt(G)$.

\paragraph{Standard SDP.}
The Standard SDP for Sparsest Cut applied to graph $G$ is the following:
\begin{align}
\min & \qquad\av_{\{x,y\} \in E(G)}\norm{v_x - v_y}_2^2, \label{eq:obj}\\
\textrm{s.t.} &\qquad \av_{x,y} \norm{v_x - v_y}_2^2 = 1.\label{eq:spread}
\end{align}

Consistent with our notation, we define the $\ell_2$ norm of two
vectors $v_x$ and $v_y$ as $\norm{v_x - v_y}_2^2 \defeq \av_i
\left(v_x^{(i)} - v_y^{(i)}\right)^2.$ Define the vectors $v_x =\bigoplus_i
v_{x_i}$ where $x = (x_1,\ldots,x_k) \in V(\Gk)$ and $\bigoplus$
denotes the direct-sum operation. First, consider the objective
function (\ref{eq:obj}),
\begin{align*}
\av_{(x,y) \in E(\Gk)} \|v_x - v_y\|^2 & = \av_j \av_{(x,y) \in E_j(\Gk)} \|v_x - v_y\|^2 \\
& = \frac{1}{k} \av_j \av_{(x,y) \in E_j(\Gk)} \|v_{x_j} - v_{y_j}\|^2 \\
& = \frac{1}{k} \av_j \opt(G) = \frac{1}{k} \opt(G).
\end{align*}

Now, let us verify that the \emph{spreadness} constraint
(\ref{eq:spread}) is satisfied.
\begin{align*}
  \av_{x,y \in V(\Gk)} \|v_x - v_y\|^2 = \av_{x,y \in V(\Gk)} \av_j
  \|v_{x_j}-v_{y_j}\|^2 = \av_{x,y \in V(G)} \|v_x - v_y\|^2.
\end{align*}

\paragraph{$k$-gonal inequalities.}
The relaxation $\Psi$ may contain a certain family of constraints
that must be satisfied by the vectors (e.g. triangle inequalities,
$k$-gonal inequalities). The best approximation algorithm by Arora,
Rao and Vazirani~\cite{ARV09} uses triangle inequalities in the SDP
relaxation.
\[\forall\ x,y,z \in V(G),\qquad \norm{v_x-v_y}_2^2 + \norm{v_y-v_z}_2^2
\ge \norm{v_x-v_z}_2^2\]

We know that the vectors we have constructed satisfy $\pair{v_x,v_y} =
\av_j \pair{v_{x_j},v_{y_j}}$. It follows that as long as these set of
constraints is invariant under a permutation of the vertices and is
linear in the dot product of the vectors (which is true in case of the
examples mentioned), the new vectors also satisfy the corresponding
constraints by linearity.

\paragraph{SDP with Sherali-Adams Constraints.} Let us now consider
the Sherali-Adams constraints. The Sherali-Adams SDP for level $t$
requires that for every set $T \subseteq V(G)$ of at most $t$
vertices, there must be a distribution $\calD_T$ on the integral
assignments to $T$ ($-1,1$ valued). There are consistency constraints
requiring that for two sets $T_1,T_2$, the marginals of the
distributions $\calD_{T_1},\calD_{T_2}$ on $T_1 \cap T_2$ are
identical. There are also consistency constraints with the vector
solution requiring that
\[\forall x,y \in V(G),\ \pair{v_x,v_y} = \av_{(z_x,z_y) \sim \calD_{\{x,y\}}} z_x z_y.\]
Now, let us define the distributions that correspond to the solution
for $\Psi(\Gk)$. In order to sample from $\calD_{\{x^1,\ldots,x^t\}}$,
we pick a random $j \in [k]$, draw a sample $z \sim
\calD_{\{x_j^1,\ldots,x_j^t\}}$ and output $z$ (Note that we are
concerned only about sets, so if an element appears more than once,
it's treated as if it appeared just once). Consistency of marginals
follows easily because of linearity. Again, since $\pair{v_x,v_y} =
\av_j \pair{v_{x_j},v_{y_j}}$, by linearity, we get that the
distributions constructed are consistent with the vectors.

\paragraph{Lasserre SDP.}
The SDP at $t^{\textrm{th}}$ round of the Lasserre Hierarchy has
vectors for every subset $S$ with at most $t$ vertices in the graph and has the
following family of constraints,
\[\pair{v_{S_1},v_{S_2}} = \pair{v_{T_1},v_{T_2}} \text{ whenever }
S_1 \Delta S_2 = T_1 \Delta T_2\ .\]
Given vectors for $S \subseteq V(G)$, we wish to construct vectors for
$T \subseteq V(\Gk)$ so that the consistency constraints are
satisfied. Define sets $T_j \subseteq V(G)$ as follows.
\[T_j \defeq \left\{\ y\ |\ \#\{x \in T|x_j = y\} \textrm{ is odd } \right\}\]
Now, define $v_{T}$ as the direct sum of these $k$ vectors, $v_{T}
\defeq \oplus_{j=1}^k v_{T_j}$. It is easy to verify that these
vectors satisfy the consistency constraints.
\end{proof}

\end{document}